\documentclass[a4paper,UKenglish,cleveref, autoref, thm-restate, nolineno]{socg-lipics-v2021}

\pdfoutput=1 
\hideLIPIcs  

\newcommand{\newR}{\tilde{r}}
\newcommand{\col}{\texttt{col}}
\DeclareMathOperator{\rank}{\texttt{rank}}
\DeclareMathOperator{\select}{\texttt{select}}
\DeclareMathOperator{\access}{\texttt{access}}

\DeclareMathOperator{\succe}{\texttt{succ}}

\newcommand{\width}{\texttt{m}}
\newcommand{\height}{\texttt{h}}
\newcommand{\word}{w}

\newcommand{\PA}{\texttt{PA}}
\newcommand{\PBWT}{\texttt{PBWT}}
\newcommand{\SA}{\texttt{SA}}

\newcommand{\HList}{\texttt{RS}}

\title{Faster Iterative $\phi$ Queries on the Positional BWT} 

\author{Paola Bonizzoni}{Department of Computer Science, University of Milano-Bicocca, Italy}{paola.bonizzoni@unimib.it}{https://orcid.org/0000-0001-7289-4988}{}

\author{Travis Gagie}{Department of Computer Science, Dalhousie University, Canada}{travis.gagie@gmail.com}{https://orcid.org/0000-0003-3689-327X}{}

\author{Younan Gao}{Department of Computer Science, University of Milano-Bicocca, Italy}{younan.gao@unimib.it}{https://orcid.org/0000-0003-4984-2551}{}

\authorrunning{Bonizzoni et al.} 

\Copyright{Paola Bonizzoni, Travis Gagie, Younan Gao} 

\ccsdesc[500]{Theory of computation~Data structures design and analysis}

\keywords{PBWT, $\phi$ queries, run-length encoding, amortized analysis, haplotypes} 

\category{} 

\relatedversion{} 

\funding{Travis Gagie is funded by NSERC Discovery Grant RGPIN-07185-2020. Paola Bonizzoni and Younan Gao have received funding from the European Union’s Horizon 2020 research and innovation programme under the Marie Sk{\l}odowska-Curie grant agreement PANGAIA No.~872539, as well as from grant MIUR 2022YRB97K (PINC, \emph{Pangenome Informatics: From Theory to Applications}), funded by the European Union under the NextGenerationEU programme, Mission~4.}

\nolinenumbers 

\EventEditors{John Q. Open and Joan R. Access}
\EventNoEds{2}
\EventLongTitle{42nd Conference on Very Important Topics (CVIT 2016)}
\EventShortTitle{CVIT 2016}
\EventAcronym{CVIT}
\EventYear{2016}
\EventDate{December 24--27, 2016}
\EventLocation{Little Whinging, United Kingdom}
\EventLogo{}
\SeriesVolume{42}
\ArticleNo{23}

\begin{document}

\maketitle

\begin{abstract}
The Positional Burrows–Wheeler Transform (PBWT) is a fundamental data structure for the efficient representation and analysis of large-scale haplotype panels. For a panel of $\height$ sequences $\{S_1, \dots, S_{\height}\}$ over $\width$ sites, a key operation is the $\phi_j(i)$ query, which returns the haplotype index immediately preceding $S_i$ in co-lexicographic order at site $j$. Efficient support for $k$ iterative queries $\phi^1, \dots, \phi^{k}$ is essential for haplotype matching and variation analysis.

In this work, we introduce a simple and novel decomposition scheme that decomposes each haplotype row into  sub-intervals, called \emph{refined segments},   within which a haplotype’s co-lexicographic predecessor for the sites   remains unchanged. We show that refined segments satisfy two key properties: (i) each segment $[b, e]$ associated with $S_i$ overlaps with at most a constant number of segments of $S_{\phi_e(i)}$, and (ii) the total number of segments is bounded by $O(\newR + \height)$, where $\newR$ denotes the number of runs in the PBWT.
Building on this decomposition, we present two space–time tradeoffs for supporting $k$ iterative $\phi$ queries:
(i) a structure using $O((\newR + \height)\log n)$ bits of space that answers $k$ iterative queries in $O(\log \log_{\word} \min(\width, \height) + k)$ time, where $n = \width \cdot \height$, and (ii) a more compact structure using $O(\newR \log \height + \height \log n)$ bits of space that supports queries in $O(k \log \log_{\word} \height)$ time.

Prior to our work, supporting these queries required $O((\newR + \height)\log n)$ bits of space and $O(k \cdot \log \log_{\word} \width)$ time. Our second tradeoff is expected to be effective in practice for modern genomic datasets, where the number $\height$ of haplotypes is typically much smaller than the number $\width$ of sites.
\end{abstract}

\section{Introduction}

The management of large-scale genomic datasets is a central challenge in bioinformatics. A typical component of these resources is the haplotype panel, which captures genetic variation across a population. Such panels consist of $\height$ sequences (haplotypes) observed at $\width$ variant sites and are conventionally represented as an $\height \times \width$ matrix, where rows correspond to haplotypes and columns to genomic positions. 
Haplotype panels are used to support several common genomic applications, including the computation of \emph{matching statistics}~\cite{Cozzi2023-iv, Bonizzoni2024-dj}, the identification of \emph{minimal positional substring covers} (MPSC)~\cite{Bonizzoni2024-dj}, and the search for \emph{set-maximal exact matches} (SMEMs)~\cite{Cozzi2023-iv, Bonizzoni2024-dj}.

To address the computational demands of these applications, Richard Durbin introduced the Positional Burrows–Wheeler Transform (PBWT)~\cite{Durbin2014-dd}. The PBWT is a data structure inspired by suffix array techniques~\cite{ManberMyers1993} and comprises two main components: the prefix array (PA) and the PBWT matrix itself.
The PA and the PBWT are $\height \times \width$ matrices in which each column $j$ stores, respectively, the permutation of haplotype indices $\{1, \dots, \height\}$ and the permutation of the $j$-th column of the haplotype panel induced by the co-lexicographical ordering of haplotype prefixes up to column $j-1$.

\begin{figure}[t]
    \centering
    \includegraphics[width=1\textwidth]{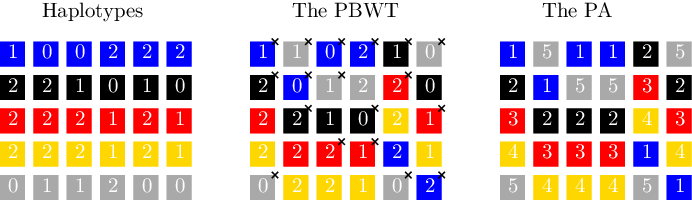}
    \caption{Example of PBWT and Prefix array. Crosses mark run-tops in the PBWT. The haplotypes $S_1, S_2, S_3, S_4, S_5$ are assigned the colors $\{\text{blue, black, red, yellow, gray}\}$, respectively.}
    \label{fig:hap-pbwt-pa}
\end{figure}

Querying the PBWT and the PA relies on three fundamental operations. \emph{Forward stepping} maps the position of a haplotype in the permutation at site $j$ to its position at site $j{+}1$, while \emph{backward stepping} performs the inverse mapping. A $\phi_j(i)$ query returns the index immediately preceding $i$ in the $j$-th column of the PA.
These three operations together serve as the core primitives for the fundamental haplotype panel queries described earlier.

Although the PBWT supports efficient matching, storing the full PA requires substantial memory for large datasets such as the UK Biobank~\cite{halldorsson2022sequences}. To mitigate this, the $\mu$-PBWT~\cite{Cozzi2023-iv} was introduced, using run-length encoding (RLE) to reduce the space requirement from $O(\width\cdot \height)$ to $O(\tilde{r})$ words, where $\tilde{r}$ denotes the number of runs in the PBWT matrix. However, this compression comes at a cost:  steppings and $\phi$ queries are no longer constant-time, typically requiring predecessor or successor queries at each step~\cite{Cozzi2023-iv}. 

Recently, Bonizzoni et al.~\cite{Bonizzoni-fast-pbwt} showed that forward and backward stepping can be supported in constant time while preserving the $O(\tilde{r})$ space bound. Their approach splits PBWT runs into at most $2\tilde{r}$ sub-runs and leverages \emph{fractional cascading}~\cite{ChazelleG86}---a technique originally developed for computational geometry---to enable constant-time forward stepping and backward stepping between sites. This result motivates the search for an equally efficient solution for supporting $\phi$ queries in run-length compressed PBWT representations.

\subparagraph*{Related Work.}
The $\phi$ query was originally introduced for a text $T[1..n]$ to support the construction of the \emph{permuted longest common prefix} (PLCP) array~\cite{KarkkainenMP09}. For a text $T[1..n]$, it is defined as $\phi(\SA[j]) = \SA[j-1]$, where $\SA[1..n]$ denotes the suffix array of $T$.
Symmetrically, the query $\phi^{-1}$ is defined by $\phi(\SA[j]) = \SA[j+1]$.

By combining a classical FM-index~\cite{FerraginaM00} with a sampled suffix array, $\phi$ queries can be supported in $O(r + n/s)$ words of space and answered in $O(s \log \log n)$ time, where $s$ is a sampling parameter and $r$ is the number of runs in the Burrows--Wheeler Transform (BWT)~\cite{BurrowsWheeler1994} of $T$. 
The $\log \log n$ factor in the query time stems from the complexity of LF- and FL-mapping operations within the FM-index~\cite{FerraginaM00}.
This approach has been adopted to report all occurrences of a query pattern in a text~\cite{KarkkainenMP09}.

Gagie et al.~\cite{GagieNP20} revisited $\phi$-queries in the context of the $r$-index, reducing the space to $O(r)$ words and the query time to $O(\log \log_w (n/r))$. 
In this setting, $\phi$ queries are utilized to locate matches sequentially once an initial suffix array value has been identified, demonstrating that the efficiency of $\phi$ queries is a critical bottleneck in compressed text indexing.


Nishimoto and Tabei~\cite{NishimotoT21} introduced the \emph{move structure} to accelerate queries on the RLE-BWT. Brown et al.~\cite{BrownG022} subsequently established the \emph{the splitting theorem}, generalizing this concept to compactly represent permutations using move structures. Since both $\phi$ and $\phi^{-1}$ queries can be supported via permutations, the move structure reduces the total running time for $k$ iterative queries to $O(\log \log_w (n/r) + k)$ while maintaining $O(r)$-word space.

Cozzi et al.~\cite{Cozzi2023-iv} extended $\phi$-queries to the PBWT as defined above. Analogous to $\phi$-queries on texts, which are used to report all occurrences of a query pattern, $\phi$-queries in the PBWT are used to locate all SMEMs~\cite{Cozzi2023-iv}. 
They observe that if a site $j$ of a  haplotype $S_i$ is not a \emph{run-top}---that is, it shares the same symbol $S_i[j]$ as its co-lexicographic predecessor $S_{\phi_j(i)}$ at that site---then $i$ and $\phi_j(i)$ remain adjacent at site $j$ of the PA, i.e., $\phi_j(i) = \phi_{j+1}(i)$ (see Proposition~\ref{pro-consistent-color}).
Exploiting this observation, the supplementary material of~\cite{Cozzi2023-iv} briefly mentions that it is possible to use a successor data structure to answer $\phi$-queries.

For completeness, we outline the high-level idea of the approach by Cozzi et al.~\cite{Cozzi2023-iv}. 
For each haplotype, we identify its all sites at which its entries appear as run-tops in the PBWT.
Viewing the haplotype as an interval $[1,\width]$, we then partition it into \emph{haplotype intervals} at these sites.
Across all haplotypes, the total number of such intervals is $O(\tilde{r} + \height)$. 
The observation by Cozzi et al. implies that, for any haplotype interval $[b,e]$ of $S_i$, all values $\phi_j(i)$ are identical for $b \le j \le e$. Hence, it suffices to store $\phi_e(i)$ for each interval. 
To support queries, a successor data structure is built for each haplotype over the right endpoints of its intervals. Given a query $\phi_j(i)$, the structure identifies the interval of $S_i$ containing $j$ via a successor query and returns the value stored for that interval. As a result, one obtains an $O(\tilde{r} + \height)$-word data structure that supports each $\phi$-query in $O(\log\log_w \width)$ time. \lipicsEnd

As discussed earlier, $\phi$ queries are used to locate occurrences of query patterns in texts and, analogously, SMEMs in the PBWT. Reporting multiple occurrences requires a sequence of consecutive applications of $\phi$, i.e., $\phi^1, \phi^2, \dots, \phi^{k}$. We call this process \emph{iterative $\phi$ queries}.

\subparagraph*{Our Contributions.} In this work, we aim to support iterative $\phi$ queries more efficiently. To this end, we first provide a simple and novel algorithm that divides haplotype intervals into sub-intervals (referred to as \emph{refined segments} henceforth) such that
\begin{itemize}
    \item \emph{Overlap constraint:} each refined segment $[b, e]$ of haplotype $S_i$ overlaps with at most $d$ refined segments from haplotype $S_{i'}$, where $i' = \phi_e(i)$ and $d\ge 2$ is a constant, and
    \item \emph{Total-size constraint:}
    the total number of refined segments is bounded by $(\newR + \height)(1+\lceil \frac{1}{d-1} \rceil)$.
\end{itemize}

Our decomposition differs from the \emph{balancing algorithm}~\cite{NishimotoT21, BrownG022, Bertram0N24, nate-balancing-2026} used in prior work on \emph{move structures} ~\cite{NishimotoT21} and their generalization via the \emph{splitting theorem}~\cite{BrownG022}. Those approaches maintain a bijection between two lists of intervals, and each split may cause imbalances in previously processed intervals.
Namely, current operations may impact past computations.
In contrast, our decomposition handles $\height > 2$ lists of intervals. 
By traversing the PA column-by-column and then row-by-row, we ensure a sequential dependency where each operation affects only pending intervals and never invalidates refined segments already established.

This decomposition facilitates the design of two space--time tradeoffs for $k$ iterative $\phi$ queries.
The first maintains a space usage of $O((\newR + \height) \log n)$ bits while improving the query time from $O(k \cdot \log \log_w \width)$ to $O(\log \log_w (\min(\width, \height)) + k)$. The second reduces the space from $O((\newR + \height) \log n)$ to $O(\newR \log \height + \height \log n)$ bits, while supporting queries in $O(k \cdot \log \log_w \height)$ time.
%

In modern genomic datasets, such as the 1000 Genomes Project~\cite{10002015global}, the number  $\height$ is typically in the thousands, while $\width$ often reaches the millions.
Our second tradeoff is particularly effective when $\height \ll \width$. In such cases, it achieves notably better space efficiency and slightly improved time efficiency compared to the previous solution, which requires $O((\newR + \height)\log n))$ bits of space and supports queries in $(O(k \log \log_w \width))$ time.

\section{Preliminaries}
\label{sect-prel}

We assume the word-RAM model~\cite{FredmanW93} with word size $w = \Theta(\log n)$ bits, where $n = \width \cdot \height$. 


\subparagraph*{Notations.}
We denote by $[i,j]$ the interval of integers ${i, i+1, \dots, j}$, and define $[i,j] = \emptyset$ if $j < i$.
Given an interval $\mathsf{I} = [i,j]$, we denote its left and right endpoints by $\mathsf{I}.b$ and $\mathsf{I}.e$, respectively, so that $\mathsf{I}.b = i$ and $\mathsf{I}.e = j$.
We say that a list of intervals \emph{partitions} an interval $[b, e]$ if the intervals are pairwise disjoint and their union equals $[b, e]$.
For any matrix $A$, we denote by $\col_j(A)$ its $j$-th column and by $\col_j(A)[i]$ the entry $A[i][j]$.
Given two strings $\alpha$ and $\beta$, we say that $\alpha$ is \emph{co-lexicographically smaller} than $\beta$ if and only if (iff) one of the following holds: i) there exists an index $k$ such that $\alpha[|\alpha|-i+1]=\beta[|\beta|-i+1]$ for all $1\le i <k$, and $\alpha[|\alpha|-k+1]<\beta[|\beta|-k+1]$; ii) or $\alpha$ is a proper suffix of $\beta$.

\subparagraph*{Positional Burrows--Wheeler Transform (PBWT).}
Let $M$ be an $\height \times \width$ matrix that stores the haplotypes $S_1, S_2, \dots, S_\height$ in rows $1, 2, \dots, \height$. We assume that the input haplotypes are not necessarily pairwise distinct and that the character symbols within these haplotypes are drawn from a general alphabet.

Closely related to the PBWT is the \emph{Prefix Array (PA)}, a matrix that records, for each column, the permutation of haplotype indices induced by the PBWT.
Formally, the Prefix Array $\PA$ built for the matrix $M$ is an $\height\times \width$ matrix, in which $\col_1(\PA)$ is simply the list $1, 2, \dots, \height$, and $\col_j(\PA)$, for $j>1$, stores the permutation of the set $\{1, \dots, \height\}$ induced by the co-lexicographic ordering of prefixes of $\{S_1, \dots, S_\height\}$ up to column $j-1$, that is, $\col_j(\PA)[i] = k$ iff $S_k[1..j-1]$ is ranked $i$ in the co-lexicographic order of $S_1[1..j-1], \dots, S_{\height}[1..j-1]$. 

Let $\PBWT$ be the matrix representing the positional BWT of $M$. Then $\PBWT$ is also an $\height \times \width$ matrix in which $\col_j(\PBWT)[i] = \col_j(M)[\col_j(\PA)[i]]$ for all $i \in [1..\height]$ and $j \in [1..\width]$.
We refer to a maximal substring of identical characters in $\col_j(\PBWT)$ as a \emph{run}.
Throughout the paper, let $r_j$ denote the number of runs for $\col_{j}(\PBWT)$. We define $\newR$ as $\sum_{1\le j\le \width} r_j$.

To distinguish haplotype indices in the PA from the character symbols stored in the PBWT, we slightly abuse notation and refer to the entries of the PA as \emph{colors}.

\subparagraph*{The $\rank$ and $\select$ queries.}
Let $A[1..n]$ be an array of length $n$ over an alphabet $\Sigma$ of size $\sigma$. For any character $c \in \Sigma$, $\rank_c(A, i)$ returns the number of occurrences of $c$ in the prefix $A[1..i]$ for $1 \le i \le n$ , while $\select_c(A, j)$ returns the position of the $j$-th occurrence of $c$ in $A$ for $1 \le j \le \rank_c(A, n)$.

    

\begin{lemma}\label{lem-rank-select}\cite{BelazzouguiN15} 
There exists a data structure using $O(n\log \sigma)$ bits of space that supports $\rank_c(A, i)$ in $O(\log \log_w \sigma)$ time and $\select_c(A, j)$ in constant time.
\end{lemma}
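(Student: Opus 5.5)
The plan is the standard chunking reduction of Golynski et al., refined as in Belazzougui and Navarro~\cite{BelazzouguiN15}. The idea is to reduce $\rank$ to a predecessor search over a universe of size only $\sigma$, and $\select$ to array lookups.

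\textbf{Chunks and global counts.} Assume $\sigma \le n$; otherwise first remap the alphabet to the characters that actually occur. Cut $A$ into $\lceil n/\sigma\rceil$ chunks of $\sigma$ consecutive positions. For each character $c$, write a bitvector $B_c$ that lists, chunk by chunk, the number of occurrences of $c$ in the chunk in unary, each count followed by a separator $0$. The total length of all $B_c$ is $n + \sigma\lceil n/\sigma\rceil = O(n)$ bits. Equip each $B_c$ with constant-time binary $\rank$/$\select$ using $o(n)$ extra bits. With these, in $O(1)$ time we obtain:
\begin{itemize}
\item the number of occurrences of $c$ before any chunk;
\item for any $j$, the chunk containing the $j$-th occurrence of $c$, together with its rank $j'$ inside that chunk.
\end{itemize}

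\textbf{Local structure for $\select$.} Inside each chunk, store the permutation $\pi$ that lists the chunk's offsets sorted by the pair (character, offset). Each entry takes $\lceil\log\sigma\rceil$ bits, so all chunks together use $O(n\log\sigma)$ bits. Also store, for each chunk, the cumulative character counts. These are again readable from the $B_c$'s, or from a per-chunk unary bitvector of $2\sigma$ bits. From them we get the starting index $s_c$ of the block of $c$ inside $\pi$. Then
\[
\select_c(A,j) = (\text{chunk start}) + \pi[s_c + j'],
\]
which is a constant number of array accesses and bitvector operations.

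\textbf{Local structure for $\rank$.} Write $i$ as a chunk plus an offset $t\in[1,\sigma]$. Then $\rank_c(A,i)$ is the global count before the chunk plus the number of offsets of $c$ in the chunk that are $\le t$. The second term is a predecessor query on the sorted list $\pi[s_c+1..s_c+k_c]$, where $k_c$ is the number of occurrences of $c$ in the chunk. This list is a set of $k_c$ keys from the universe $[1,\sigma]$.

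For each such list, build a predecessor structure over universe $\sigma$ that uses $O(k_c\log\sigma)$ bits and answers queries in $O(\log\log_w\sigma)$ time. The intended way is two levels:
\begin{itemize}
\item Sample every $b=\Theta(w/\log\sigma)$-th key and put the samples into a Pătrașcu--Thorup / fusion-style predecessor structure. This finds the right block in $O(\log\log_w \sigma)$ time. Its $O(k_c/b)$ words amount to $O(k_c\log\sigma)$ bits.
\item Finish inside the block. The block holds $b$ keys of $\log\sigma$ bits each, which fit in $O(1)$ words. A word-parallel comparison followed by a popcount locates the predecessor in $O(1)$ time.
\end{itemize}
Summing over all characters and chunks keeps the total space at $O(n\log\sigma)$ bits.

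\textbf{Main obstacle.} The delicate step is this last one: the predecessor structure must run in $O(\log\log_w\sigma)$ time, not $O(\log\log_w n)$, while costing only $O(\log\sigma)$ bits per key. The chunking is what makes the universe $\sigma$. The remaining work is to check carefully that the sampling parameter $b$ balances the word size against the key length, so that the block of keys packs into $O(1)$ words and the sample structure stays within budget.
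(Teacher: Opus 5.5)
Your proposal is essentially the standard Golynski-style construction behind the cited result of Belazzougui and Navarro; the paper gives no proof of its own, only the citation. As there, you use chunks of length $\sigma$, answer select through the per-chunk permutation $\pi$, and answer rank through a predecessor search over universe $\sigma$, so it is correct in outline, with two points to tighten. First, lists with $k_c \le b$ should get no sample structure, since they already fit in $O(1)$ words of $\pi$, and heavy lists should be located through an $O(\sigma)$-bit per-chunk directory: charging $\Omega(1)$ words to each of the up to $n$ nonempty (chunk, character) pairs would cost $\Theta(n\log n)$ bits. Second, the $O(\log\log_w\sigma)$ bound for the samples must come from the P\u{a}tra\c{s}cu--Thorup (van Emde Boas--type) structure, not from fusion trees, which only give $O(\log_w k)$.
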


\subparagraph*{Successor queries.}
Let $A \subseteq \{1, \dots, n\}$ be a set of $n'$ integers. For any $\alpha \in \{1, \dots, n\}$ and $1 \le i \le n'$, $\access(A, i)$ returns the $i$-th smallest element of $A$, and $\succe(A, \alpha)$ returns the smallest index $i$ such that $\access(A, i) \ge \alpha$, or $|A|+1$ if no such index exists.



\begin{lemma}\label{lem-pred-succ}\cite[Theorem~A.1]{BelazzouguiN15}\cite[Sparse bit vectors]{OkanoharaS07}
Given a set $A$ of $n'$ keys drawn from the universe $\{1, \dots, n\}$, there exists a data structure that represents $A$ using $O\!\left(n' \log \frac{n}{n'}\right)$ bits and supports $\access(A, i)$ in constant time and $\succe(A, \alpha)$ queries in $O\!\left(\log \log_w \frac{n}{n'}\right)$ time.
\end{lemma}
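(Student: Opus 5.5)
The plan is to use an Elias--Fano style layout for $A$. Access will come from rank/select on a unary bit vector, and successor will come from a two-level search: bucket by high bits, then a predecessor structure on sampled low parts, then word-parallel search inside a block. Write the keys of $A$ as $a_1<\dots<a_{n'}$ and set $u=\lceil n/n'\rceil$.

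\textbf{Layout and access.} Split each key $a_i-1$ into a \emph{high part} $\lfloor (a_i-1)/u\rfloor\in[0,n'-1]$ and a \emph{low part} $(a_i-1)\bmod u$, which takes $\lceil\log u\rceil$ bits. Concatenate the low parts into an array $L$ of $n'\lceil\log u\rceil=O(n'\log\frac{n}{n'})$ bits. Encode the high parts in a bit vector $H$ of length at most $2n'$: for each $i$, set the bit at position $i+\mathrm{high}(a_i)$ to $1$. Equivalently, bucket $t$ is a run of ones terminated by a $0$. Build the constant-time select structure of Lemma~\ref{lem-rank-select} on $H$ (binary alphabet, $O(n')$ bits). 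Then $\access(A,i)$ computes $\mathrm{high}(a_i)=\select_1(H,i)-i$, reads $L[i]$ with a constant number of word operations, and recombines the two parts. This gives $O(1)$ time.

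\textbf{Successor.} Given $\alpha$, let $t=\lfloor(\alpha-1)/u\rfloor$ and $\ell=(\alpha-1)\bmod u$.
\begin{enumerate}
\item Using $\select_0(H,t)$ and $\select_0(H,t+1)$, find the index range $[p,q]$ of keys whose high part is $t$. If the range is empty or all of its low parts are below $\ell$, the answer is $q+1$, because the next key lies in a later bucket or does not exist.
\item Otherwise, find the first $i\in[p,q]$ with $L[i]\ge\ell$. This is a successor query over the universe $[0,u)$, restricted to one bucket.
\end{enumerate}

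\textbf{Main obstacle.} The hard part is step~2: it must run in $O(\log\log_w u)$ time while adding only $O(n'\log u)$ bits.
\begin{enumerate}
\item Cut each bucket into blocks of $b=\Theta(w/\log u)$ consecutive low parts, so that every block fits in $O(1)$ words of $L$.
\item Store the first low part of each block as a sample, at most $O(n'/b+n')$ samples in total.
\item Build a predecessor structure over the universe $u$ on the samples of each bucket, for instance an optimal P\v{a}tra\c{s}cu--Thorup structure or a y-fast trie with fusion nodes. It answers in $O(\log\log_w u)$ time and uses $O(\log u)$ bits per sample. A bucket with a single block needs no structure.
\end{enumerate}
To answer step~2, the sample structure identifies the candidate block. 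Inside that block, the $b$ packed low parts are compared with $\ell$ in parallel by standard word-RAM packed comparison: subtract a replicated copy of $\ell$ and inspect the sign bits. A most-significant-bit computation then locates the answer in $O(1)$ time. The only subtle point is a block whose entries are all below $\ell$; in that case the answer is the first entry of the next block, which costs $O(1)$ extra time.

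The bucket pointers and the per-bucket structures together take $O(n'\log u)$ bits. Summing all parts gives $O(n'\log\frac nn')$ bits, $O(1)$ access, and $O(\log\log_w\frac{n}{n'})$ successor time, as claimed.
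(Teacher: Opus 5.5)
The paper does not prove this lemma. It imports it from Belazzougui--Navarro (Theorem~A.1) for successor and from the Okanohara--Sadakane sparse bit vector (Elias--Fano) for access. Your construction is a reasonable self-contained version of the machinery behind those citations: Elias--Fano for $O(1)$ $\access$, high-part buckets that shrink the universe to $u=\lceil n/n'\rceil$, and sampled predecessor search plus packed comparison inside each bucket. The query algorithm and its time bound are fine. The space analysis, however, fails as written, in two places.

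\textbf{First gap: cost per sample.} The structures you invoke (linear-space P\u{a}tra\c{s}cu--Thorup, y-fast tries) store $O(1)$ words per key, i.e.\ $\Theta(w)$ bits, not $O(\log u)$ bits. Indeed, if $O(\log u)$ bits per key with $O(\log\log_w u)$ time were available, you could index every low part directly and the sampling would be pointless; this is essentially the lemma itself. Your accounting multiplies a sample count of $O(n'/b+n')$ by the per-sample cost. With the true cost that product is $O(n'w)$ bits.

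\textbf{Second gap: bucket pointers.} There are $n'$ buckets, so one $\Theta(\log n)$-bit pointer per bucket costs $\Theta(n'\log n)$ bits. That is not $O(n'\log\frac{n}{n'})$ when $n'$ is large, e.g.\ $n'=n/2$.

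\textbf{How to repair both.} Build a structure, and store a pointer, only for buckets holding more than $b$ keys.
A bucket with $k_t>b$ keys has fewer than $2k_t/b$ blocks, and there are at most $n'/b$ such buckets.
So these buckets contribute $O(n'/b)$ samples and pointers.
At $O(w)$ bits each (using $\log n=O(w)$), this totals $O(n'w/b)=O(n'\log u)$ bits.
This is the real reason $b=\Theta(w/\log u)$ is the right block length.
To reach bucket $t$'s pointer, keep an $n'$-bit vector marking the large buckets, with constant-time $\rank$ (Lemma~\ref{lem-rank-select} with $\sigma=2$).

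\textbf{Minor points.} The sign-bit comparison needs a spare test bit per field. A plain y-fast trie gives only $O(\log\log u)$ time, so rely on the linear-space P\u{a}tra\c{s}cu--Thorup bound, or a trie searched at stride $\log w$ with $w$-bit bitmap nodes. The additive $O(n')$ bits for $H$ and the marker vector fit only under the usual reading $O(n'(1+\log\frac{n}{n'}))$ of the bound.
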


\subparagraph*{The definitions of $\phi$ and $\phi^{-1}$ in PBWT.}
Given a site $j \in [1,\width]$ and a haplotype index $i \in \{1,\dots,\height\}$, the $\phi_j(i)$ query returns the index (or color) of the haplotype that immediately precedes $i$ in $\col_j(\PA)$, while the $\phi^{-1}_j(i)$ query returns the index of the haplotype that immediately follows $i$ in $\col_j(\PA)$.
Formally, let $k \in [1,\height]$ be the row index such that $\col_j(\PA)[k] = i$. If $k = 1$, then $\phi_j(i) = 0$; otherwise, $\phi_j(i) = \col_j(\PA)[k-1]$. Symmetrically, if $k = \height$, then $\phi^{-1}_j(i) = \height + 1$; otherwise, $\phi^{-1}_j(i) = \col_j(\PA)[k+1]$. 
For example, in Figure~\ref{fig:hap-pbwt-pa}, we have $\col_4(\PA)=\{2,3,4,1,5\}$. In particular, $\col_4(\PA)[2] = 3$, $\phi_4(3) = 2$, and $\phi_4^{-1}(3) = 4$.

For any $j \in [1, \width]$ and $i \in [1, \height]$, define the iterated function $\phi^k_j(i)$ for $k \in \{1, 2, \dots, x-1\}$, where $x$ is the unique integer such that $\col_j(x) = i$, as follows: a) $\phi^1_j(i) = \phi_j(i)$, and $\phi^k_j(i) = \phi_j\big(\phi^{k-1}_j(i)\big)$ for $k > 1$. In the remainder of this paper, we present only the solution for iterative $\phi$ queries, as the one for iterative $\phi^{-1}$ follows symmetrically.


\subparagraph*{Run-tops and haplotype intervals.} For any $1\le i\le \height$ and $1\le j\le\width$, we refer to $(i, j)$ as a \emph{run-top} if $i=1$ or $i>1$ and $\col_j(\PBWT)[i]\ne \col_j(\PBWT)[i-1]$.
See Figure~\ref{fig:hap-pbwt-pa} for an example.


For each haplotype $S_i$, we partition the range $[1, \width]$ into a set of \emph{haplotype intervals}, denoted by $L_i$. The right endpoints of these intervals are defined by the sites where $S_i$ appears as a run-top in the PBWT. Formally, let $b_1 < b_2 < \dots < b_k = \width$ be the sorted columns such that for each $b_{\tau}\ne \width$, there exists a row index $i'$ where $(i', b_{\tau})$ is a run-top and $\col_{b_{\tau}}(\PA)[i'] = i$. These boundaries partition $[1, \width]$ into $k$ disjoint intervals $L_i = \{I_1, I_2, \dots, I_k\}$, where $I_1 = [1, b_1]$ and $I_{\tau} = [b_{\tau-1} + 1, b_{\tau}]$ for $1 < \tau \le k$.
As shown in Figure~\ref{fig:hap-inter-refined-seg}, the entries of haplotype $S_5$ (gray) that appear as run-tops in the PBWT are at sites $\{1, 2, 3, 5, 6\}$. Note that the fourth entry of $S_5$ does not appear as run-tops in the PBWT, though $S_5[4]\ne S_4[4]$. 
Following the definition of $L_i$, these site indices serve as the right endpoints for the partition of $[1, m]$. Accordingly, the haplotype intervals of $S_5$ are
$[1,1],[2,2], [3,3], [4, 5]$ and $[6,6]$.

\begin{figure}[t]
    \centering
    \includegraphics[width=1\textwidth]{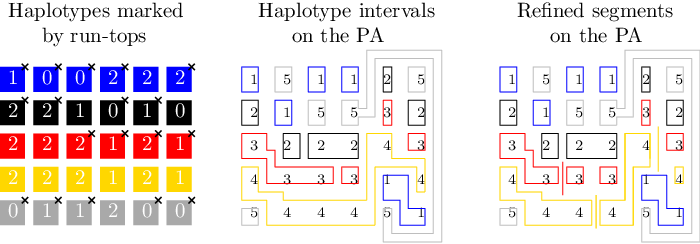}
   \caption{Example of haplotype intervals and refined segments. No refined segment overlaps more than $d = 2$ segments immediately above it in the PA.}
    \label{fig:hap-inter-refined-seg}
\end{figure}

Proposition~\ref{pro-consistent-color} ensures that haplotype intervals in $L_i$ define the maximal contiguous site ranges in the PA where the co-lexicographic predecessor of $i$ remains the same.

\begin{proposition}\label{pro-consistent-color}
Fix any $1 \le i \le \height$. Let $[b,e]$ be any haplotype interval in $L_i$. Then either for all $j \in [b,e]$ the colors $\phi_j(i)$ are all equal , or $\col(\PA)_j(1) = i$ holds for all $j \in [b,e]$.
\end{proposition}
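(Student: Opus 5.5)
The plan is to fix an interval $[b,e]\in L_i$ and argue by induction on $j$ from $b$ to $e$. By construction of $L_i$, for every $j\in[b,e-1]$ the haplotype $i$ is \emph{not} a run-top at site $j$. So, writing $k_j$ for the row of $i$ in $\col_j(\PA)$, we have $k_j>1$ and $\col_j(\PBWT)[k_j]=\col_j(\PBWT)[k_j-1]$. (The endpoint $e$ itself may be a run-top; this does not matter, because the claim at $e$ follows from the transition $e-1\to e$.) The statement then reduces to a one-step claim. For $j\in[b,e-1]$: if $i$ is not a run-top at site $j$, then $\phi_{j+1}(i)=\phi_j(i)$. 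If $i$ is in row $1$ at site $j$, it is automatically a run-top. Hence inside $[b,e-1]$ the row-$1$ case cannot occur, except possibly at $j=b$ when $b=e$. I handle that degenerate case separately below.

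The key step is the standard PBWT stepping argument. Let $i'=\phi_j(i)=\col_j(\PA)[k_j-1]$. Since $i$ is not a run-top, $S_i[j]=S_{i'}[j]=c$. The order at site $j+1$ sorts $S_\cdot[1..j]$ co-lexicographically, which is a stable sort by the symbol at position $j$ applied to the order at site $j$. Concretely, the row of $x$ in $\col_{j+1}(\PA)$ is $C[S_x[j]]+\rank_{S_x[j]}(\col_j(\PBWT),k_x)$, where $C[c]$ counts the entries at site $j$ with symbol smaller than $c$. This is exactly the LF/forward-stepping formula.

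The ranks of $i'$ and $i$ for symbol $c$ differ by exactly one, since they are adjacent rows both carrying $c$. Therefore $i'$ lands immediately above $i$ in $\col_{j+1}(\PA)$, which gives $\phi_{j+1}(i)=i'$. One subtlety is ties between equal prefixes, since haplotypes need not be distinct. I would fix the convention that ties are broken by the previous column's order, equivalently by index at column $1$, which makes the sort stable. Under that convention the rank formula holds verbatim.

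By induction, $\phi_j(i)=\phi_b(i)$ for all $j\in[b,e]$, which is the first alternative of the proposition. It remains to rule out, or absorb, the situation where $i$ sits in row $1$ somewhere in $[b,e]$. If $k_j=1$ for some $j<e$, then $(1,j)$ is a run-top colored $i$, so $j$ would be a right endpoint of an interval of $L_i$, contradicting $j<e$ inside $[b,e]$. So the only possibility is $k_e=1$ with $k_j>1$ for $j<e$. The stepping argument above shows that $i'$ stays directly above $i$ at $e$, so $k_e>1$ unless $b=e$. When $b=e$ the interval is a single site, and either $\phi_e(i)$ is a single color or $i$ is in row $1$; both alternatives hold trivially.

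The main obstacle is the careful treatment of the boundary conventions rather than the combinatorics. First, I need the tie-breaking rule for identical prefixes to be consistent with stable sorting. Second, I need to handle $\phi=0$ versus row $1$ at the endpoints. The single-step stability lemma itself is routine.
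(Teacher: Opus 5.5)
Your proof is correct and takes essentially the same approach as the paper. Both arguments use three ingredients:
\begin{itemize}
\item $i$ is not a run-top at any site of $[b,e-1]$;
\item this lets you propagate $\phi_j(i)=\phi_{j+1}(i)$ through the stable co-lexicographic sort, which you make explicit via the forward-stepping rank formula;
\item the row-$1$ case is disposed of by noting that it forces $b$ to be a right endpoint in $L_i$, hence $b=e$.
\end{itemize}
Your treatment of the row-$1$ case is cleaner than the paper's, and your tie-breaking convention spells out the stability that the paper only invokes implicitly.
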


\begin{proof}
For each $j \in [b,e]$, define $f(j)$ to be the index $x$ such that $\col_j(\PA)[x] = i$.
Since $[b,e]$ is a haplotype interval, the pair $(f(j), j)$ is not a run-top for any $j \in [b,e)$. 

We distinguish two cases based on the value of $f(b)$.
If $f(b) > 1$, since $(f(b), b)$ is not a run-top, it follows that $\col_b(\PBWT)[f(b)] = \col_b(\PBWT)[f(b)-1]$. By the stability of the sorting underlying $\PBWT$, this implies $\phi_b(i) = \phi_{b+1}(i)$. Repeating the same argument inductively for each $j \in [b,e)$, we obtain $\col_j(\PBWT)[f(j)] = \col_j(\PBWT)[f(j)-1]$ and $\phi_j(i) = \phi_{j+1}(i)$. Therefore, the colors $\phi_j(i)$ are identical for all $j \in [b,e]$.
Otherwise, $f(b) = 1$, so $(1,b)$ is a run-top by definition. Moreover, if $j > 1$, then $(1, j-1)$ is also a run-top, and if $j < \width$, then $(1, j+1)$ is a run-top as well. It follows that $b = e$. By the definition of $f(b)$, we have $\col_b(\PA)[1] = i$, and hence $\col_j(\PA)[1] = i$ for all $j \in [b,e]$.
\end{proof}





Proposition~\ref{pro-size} (proof in Appendix~\ref{app-prel}) bounds the total number of haplotype intervals.

\begin{proposition}\label{pro-size}
    The total number of haplotype intervals created lies in $[\newR, \newR + \height]$.
\end{proposition}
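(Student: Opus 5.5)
The plan is a double-counting argument. We count the right endpoints of haplotype intervals, summed over all haplotypes. By definition, $|L_i|$ equals the number of distinct columns $b_\tau$ in the sorted boundary list of $S_i$. These columns are the sites $j \ne \width$ at which $i$ occupies a run-top position in $\col_j(\PA)$, together with the forced final boundary $\width$. So
\[
\sum_{i=1}^{\height} |L_i| \;=\; \sum_{i=1}^{\height} \bigl( |\{ j<\width : i \text{ is at a run-top in column } j\}| + 1 \bigr).
\]

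The first step is to observe that each column $j$ contains exactly $r_j$ run-tops. Each run of $\col_j(\PBWT)$ has exactly one top row: row $1$, or a row whose symbol differs from the row above. Within a column, each color $i$ occupies exactly one row, because $\col_j(\PA)$ is a permutation. Hence the map sending a run-top $(x,j)$ to the pair $(\col_j(\PA)[x], j)$ is a bijection between run-tops and pairs $(i,j)$ with $i$ at a run-top in column $j$. Summing over $j<\width$ then gives
\[
\sum_{i=1}^{\height} |L_i| = \sum_{j<\width} r_j + \height = \newR - r_\width + \height .
\]

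The second step is to bound this quantity using $1 \le r_\width \le \height$. The upper bound is immediate: $\newR - r_\width + \height \le \newR + \height$ because $r_\width \ge 1$. The lower bound follows from $r_\width \le \height$, which gives $\newR - r_\width + \height \ge \newR$. Together these place the total in $[\newR, \newR+\height]$.

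The only point requiring care is the final column. A haplotype at a run-top in column $\width$ contributes just the single endpoint $\width$, not two, so there is no double counting. Every haplotype still gets exactly one interval ending at $\width$, even if it is not at a run-top there. This is why the count is exact rather than merely bounded, and why the clean interval $[\newR,\newR+\height]$ results. I expect no real obstacle beyond stating this bookkeeping cleanly.
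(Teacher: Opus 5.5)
Your proof is correct and uses essentially the same double-counting argument as the paper: run-tops are matched, via the permutation $\col_j(\PA)$, to right endpoints of haplotype intervals, and each haplotype's forced endpoint $\width$ is counted separately. Your version is slightly sharper, since it gives the exact count $\newR - r_\width + \height$ and then applies $1 \le r_\width \le \height$, whereas the paper argues only an injection from run-tops for the lower bound and adds the $\height$ endpoints at $\width$ for the upper bound.
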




\section{Decomposition of Haplotype Intervals}
\label{sect-decomp}

In this section, we describe the algorithm that divides haplotype intervals into refined segments that satisfy both the overlap and the total-size constraints.


\subsection{The Decomposition Algorithm}



We assume the lists $L_1, \dots, L_{\height}$ of haplotype intervals are given, each implemented as a linked list. 
To construct the refined segments, we maintain an array $\HList[1..\height]$, where each $\HList[c]$ is a linked list of disjoint segments for haplotype $S_c$. These lists are initially empty.
The algorithm processes the Prefix Array ($\PA$) column by column from $j = 1$ to $\width$. Then, at each column $j$, we iterate through rows $i = 1$ to $\height$ and perform the following steps:

\begin{itemize}
    \item \textit{Retrieval:} retrieve the first interval $[b_c, e_c]$ stored in $L_c$, where $c=\col_j(\PA)[i]$.
    
    \item \textit{Passive Split:} if $j = e_c$, remove $[b_c, e_c]$ from $L_c$ and append it to the tail of $\HList[c]$.
    
    \item \textit{Active Split:} otherwise, if $i > 1$ and $[b_c, j]$ overlaps exactly $d$ refined segments in $\HList[c']$, where $d>1$ is a predefined constant and $c' = \col_j(\PA)[i-1]$, then append $[b_c, j]$ to the tail of $\HList[c]$, remove $[b_c, e_c]$ from $L_c$, and insert $[j+1, e_c]$ at the head of $L_c$.
\end{itemize}


An example of this decomposition process is shown in Figure~\ref{fig:the-alg}. Immediately before processing the fifth column of the PA, $L_1$ (blue) stores $\{[5,6]\}$; $L_2$ (black) and $L_3$ (red) store $\{[5,5],[6,6]\}$; $L_4$ (yellow) stores $\{[4,6]\}$; and $L_5$ (gray) stores $\{[4,5],[6,6]\}$.
At position $(2,5)$ of the PA, a Passive Split is triggered: the interval $[5,5]$ from $L_c$ (red), where $c=\col_5(\PA)[2]=3$, is removed and added to $\HList[c]$. Then, at position $(3,5)$, an Active Split is triggered, since $[4,5]$, a subinterval of $[4,6]$ in $L_c$ (yellow) with $c=\col_5(\PA)[3]=4$, overlaps $d(=2)$ refined segments in $\HList[c']$, where $c'=\col_5(\PA)[2]=3$ (red). As a result, $[4,5]$ is appended to $\HList[4]$, and $L_4$ is updated to $\{[6,6]\}$.
Finally, at position $(4,5)$, no split is triggered, so $L_c$ and $\HList[c]$ (blue), where $c=\col_5(\PA)[4]=1$, remain unchanged.

\begin{figure}[t]
    \centering
    \includegraphics[width=1\textwidth]{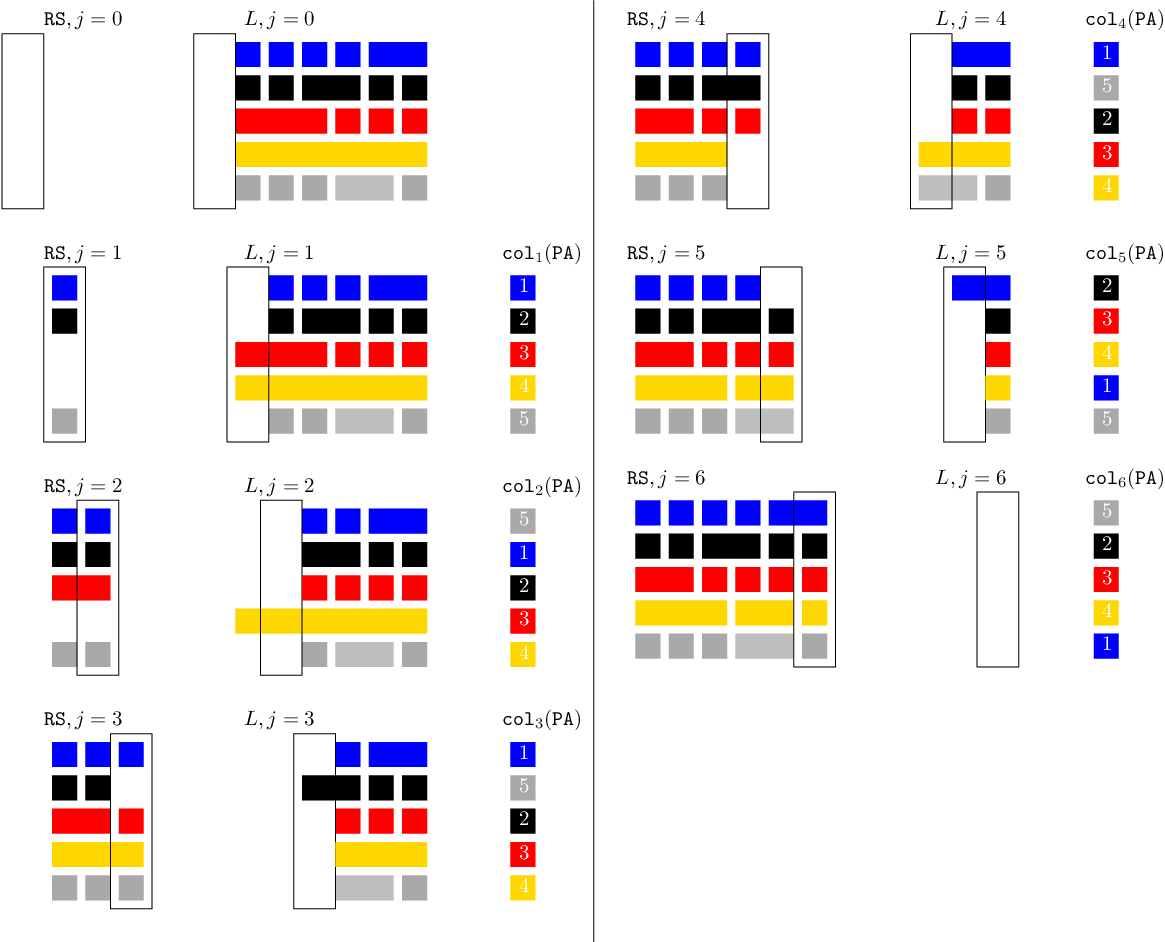}
\caption{Example of the states of $\HList$ and $L$ at column $j$, for $j \in [0, \width]$, with the parameter $d=2$.}
    \label{fig:the-alg}
\end{figure}



\subsection{The Correctness of the Algorithm}
We prove that after construction, the refined segments stored in $\HList$ satisfy both the overlap and total-size constraints.
All proofs omitted in the section can be found in Appendix~\ref{app-decomp}.

\subparagraph{Overview of Correctness.}
Let $c' = \phi_j(c)$. 
Recall that the haplotype index $c'$ is stored immediately above $c$ in column $j$ of the matrix PA. We process this matrix column by column (left to right) and rows in increasing order (1 to $\height$). This specific order ensures that when we reach an index $c$ at column $j$, all refined segments for haplotype $S_{c'}$ with right endpoints at most $j$ have already been established.

Let $[b_c, e_c]$ denote the interval retrieved from $L_c$ at a Retrieval step at column $j$. 
By the decomposition algorithm, we define a new refined segment $[b_c, j]$ for $S_c$ if $j$ is the right endpoint of an original haplotype interval, or if the interval $[b_c, j]$ overlaps exactly $d$ refined segments already established for $S_{c'}$. Lemma~\ref{lem-natural-term} shows that this construction ensures that the interval $[b_c, j]$ overlaps with at most $d$ refined segments of $S_{c'}$ at the moment of its creation.

When a refined segment $[b_c, j]$ is created for haplotype $S_c$, we prove that one of two exhaustive cases must occur: (i) The segment overlaps at most $d$ previously created refined segments of $S_{c'}$, and the most recent segment for $S_{c'}$ also terminates at $j$ (Lemma~\ref{lem-natural-term}); (ii) The segment overlaps fewer than $d$ such segments, and the most recent segment for $S_{c'}$ terminates strictly before $j$ (Lemma~\ref{lem-smaller-than-j}).
In the former, any refined segment for $S_{c'}$ created in a future step must begin after $j$, ensuring the overlap count for $[x_c+1, j]$ remains at most $d$. 
In the second case, exactly one future segment for $S_{c'}$ can overlap $[x_c+1, j]$, which increases the overlap count by at most one and thus preserves the overall upper bound of $d$.
Section~\ref{sect-size-constraint} shows a non-trivial amortized analysis bounding the total number of refined segments. 

\subsubsection{The Overlap Constraint}

We first prove that, after construction, $\HList[c]$ partitions $[1, \width]$ for every $c \in [1, \height]$.

\begin{lemma}\label{lem-basic}
Fix any color $1 \le c \le \height$. During the execution of the algorithm, if the left endpoint of the first interval in $L_c$ is $x$, then the refined segments stored in $\HList[c]$ partition the interval $[1, x-1]$, and the intervals stored in $L_c$ partition the interval $[x, \width]$.
\end{lemma}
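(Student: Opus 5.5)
We argue by induction on the number of elementary operations the algorithm performs, showing that the stated two-part invariant holds before the first Retrieval step and is preserved by every Passive Split and every Active Split. Retrieval steps and rows where no split fires leave $L_c$ and $\HList[c]$ unchanged, so they trivially preserve the invariant. Splits performed at a row whose color is $c''\neq c$ only touch $L_{c''}$ and $\HList[c'']$, so it suffices to track operations on color $c$ itself. We also treat the terminal state: once $L_c$ becomes empty, the invariant should read with $x=\width+1$, meaning $\HList[c]$ partitions $[1,\width]$ and $L_c$ partitions $\emptyset$. We will show this degenerate case is reached exactly when the last interval $[b,\width]$ is moved by a Passive Split.

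\emph{Base case.} Initially $\HList[c]$ is empty, which trivially partitions $[1,0]$. The list $L_c$ consists of the haplotype intervals $I_1,\dots,I_k$. By their definition ($I_1=[1,b_1]$, $I_\tau=[b_{\tau-1}+1,b_\tau]$, $b_k=\width$), these are pairwise disjoint, consecutive, and cover $[1,\width]$. The first one starts at $x=1$, so the claim holds.

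\emph{Inductive step.} Suppose that before an operation on color $c$ the first interval of $L_c$ is $[x,e_c]$, with $x=b_c$. By hypothesis $\HList[c]$ partitions $[1,x-1]$ and $L_c$ partitions $[x,\width]$; since $L_c$ is sorted with consecutive intervals, the remaining intervals partition $[e_c+1,\width]$.

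In a Passive Split, $[x,e_c]$ is moved to the tail of $\HList[c]$. Then $\HList[c]$ partitions $[1,x-1]\cup[x,e_c]=[1,e_c]$. The new first interval of $L_c$ (if any) begins at $e_c+1$, and $L_c$ partitions $[e_c+1,\width]$, as required. If $L_c$ becomes empty, then $e_c=\width$ and we are in the terminal state.

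In an Active Split at column $j$ with $j\neq e_c$, we append $[x,j]$ to $\HList[c]$ and replace $[x,e_c]$ by $[j+1,e_c]$ at the head of $L_c$. Then $\HList[c]$ partitions $[1,j]$, and $L_c$ partitions $[j+1,e_c]\cup[e_c+1,\width]=[j+1,\width]$ with first left endpoint $j+1$. This requires $x\le j<e_c$, so that both pieces are nonempty and disjoint.

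The only step needing care is exactly this side condition: whenever the algorithm retrieves $[b_c,e_c]$ at column $j$, we must have $b_c\le j\le e_c$. We strengthen the invariant with the claim that, at the moment color $c$ is processed in column $j$, its first interval contains $j$. This follows by induction on $j$, since each color appears exactly once in every column of $\PA$.

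1. At column $1$, the first interval starts at $1$.
2. If at column $j$ the interval satisfies $b_c\le j<e_c$ and no split occurs, then at column $j+1$ we still have $b_c\le j+1\le e_c$.
3. If a Passive Split occurs ($j=e_c$), the next interval starts at $j+1$.
4. If an Active Split occurs, the new head $[j+1,e_c]$ starts at $j+1$.

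In every case the head of $L_c$ contains $j+1$ when $c$ is next met, in column $j+1$. This gives $j\ge b_c$, so $[x,j]$ is nonempty. Together with $j\neq e_c$ it gives $j<e_c$, so $[j+1,e_c]$ is nonempty. I expect this auxiliary "head contains current column" invariant to be the main (though mild) obstacle; the partition bookkeeping itself is routine.
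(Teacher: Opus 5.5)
Your proof is correct and follows the same induction over the algorithm's execution as the paper: the base case uses that the haplotype intervals partition $[1,\width]$, and each split preserves the invariant. Your version is more careful than the paper's. The paper only asserts $x\le j$ and loosely claims that after any split $\HList[c]$ partitions $[1,y]$, whereas you handle the Active Split separately (there the boundary moves to $j+1$, not $y+1$) and explicitly prove the invariant $b_c\le j\le e_c$ that keeps both new pieces nonempty.
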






\begin{corollary}\label{cor-basic}
After processing all $\width$ columns of the matrix $\PA$, the refined segments in $\HList[c]$, for each $1 \le c \le \height$, partitions the interval $[1,\width]$.
\end{corollary}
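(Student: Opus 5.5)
The corollary follows from Lemma~\ref{lem-basic} once we show that, after all $\width$ columns are processed, $L_c$ is empty for every $c$. In that terminal state we read Lemma~\ref{lem-basic} with the convention that the ``first interval'' of an empty list has left endpoint $x=\width+1$. Then $\HList[c]$ partitions $[1,\width]$, and $L_c$ partitions the empty interval $[\width+1,\width]$. If one prefers not to apply the lemma to an empty list, apply it just before the last removal from $L_c$ instead. At that moment $L_c$ holds exactly one interval $[x,\width]$, and $\HList[c]$ partitions $[1,x-1]$. The final step appends $[x,\width]$ to $\HList[c]$ (a Passive Split at column $\width$), which yields a partition of $[1,\width]$.

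It remains to check that every $L_c$ is emptied, so the plan is to track the first interval of $L_c$ column by column. Each column $j$ of $\PA$ is a permutation of $\{1,\dots,\height\}$, so color $c$ is visited exactly once per column. The invariant to prove by induction on $j$ is: \emph{immediately before column $j$ is processed, either $L_c$ is empty or its first interval $[b_c,e_c]$ satisfies $b_c\le j\le e_c$.}

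At $j=1$ this holds because $L_c$ partitions $[1,\width]$ and starts at $1$. Now suppose it holds at column $j$ with first interval $[b_c,e_c]$. When $c$ is visited at column $j$, there are three cases.
\begin{itemize}
\item If $j=e_c$, a Passive Split removes the interval. The next interval in $L_c$, if any, starts at $e_c+1=j+1$ by Lemma~\ref{lem-basic}.
\item If $j<e_c$ and an Active Split occurs, the new head is $[j+1,e_c]$.
\item If $j<e_c$ and no split occurs, the head is unchanged, and $b_c\le j<j+1\le e_c$.
\end{itemize}
In each case the invariant holds at column $j+1$.

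Finally, the last interval of $L_c$ ends at $\width$, because $L_c$ partitions $[x,\width]$. At column $\width$ the invariant therefore forces the head to satisfy $e_c=\width=j$, so it is removed by a Passive Split and $L_c$ becomes empty.

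The only slightly delicate step is the bookkeeping in this invariant: it must guarantee that the head interval never ``falls behind'' the current column. If it did, the case $j=e_c$ could be skipped and an interval would remain in $L_c$ forever. Since color $c$ is visited once per column and every split advances the head to start at $j+1$, this is routine, and the rest is a direct application of Lemma~\ref{lem-basic}.
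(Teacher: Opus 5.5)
Your proof is correct and follows the paper's route: the paper treats this corollary as an immediate consequence of Lemma~\ref{lem-basic} applied once all columns are processed, and gives no separate argument. Your invariant $b_c\le j\le e_c$ makes explicit what the paper leaves implicit (its proof of Lemma~\ref{lem-basic} only asserts $x\le j$ ``by the algorithm''): the head of $L_c$ never falls behind the current column, so every $L_c$ is emptied by a Passive Split at column $\width$.
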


Next, we establish an upper bound on the number of refined segments in $\HList[\phi_j(c)]$ that overlap the interval $[b_c, j]$ used in either a Passive Split or an Active Split step. 


\begin{lemma}\label{lem-natural-term}
For any $c \in [1, \height]$, let $b_c$ be the left endpoint of the head interval of $L_c$ at column $j$. Let $A_j$ denote the refined segments currently in $\HList[c']$, where $c' = \phi_j(c)$. The interval $[b_c, j]$ overlaps at most $d$ refined segments in $A_j$. Furthermore, if $A_j$ is non-empty, then $b_c$ is contained within some refined segment in $A_j$.
\end{lemma}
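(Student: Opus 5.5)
Our plan is induction on the sequence of steps of the algorithm: the columns are processed left to right and, within a column, the rows top to bottom. The statement concerns the moment the algorithm reaches the cell of $\col_j(\PA)$ containing $c$. At that moment $c' = \phi_j(c)$ sits in the row just above, so its cell at column $j$ has already been processed. (If $c$ is in row $1$ there is no $c'$ and nothing to prove.)

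\textbf{The second claim.} Let $A_j$ be the segments of $\HList[c']$ at this moment, and let $x'$ be the left endpoint of the head of $L_{c'}$. By Lemma~\ref{lem-basic}, $A_j$ partitions $[1, x'-1]$, so the claim reduces to showing $b_c \le x'-1$. We obtain this from an inductive invariant on the column at which $b_c$ was set.
\begin{itemize}
\item If $b_c = 1$ and $A_j \neq \emptyset$, then $b_c$ lies in the first segment of $A_j$.
\item Otherwise $b_c = y+1$, where some earlier split of $c$ happened at column $y < j$.
\end{itemize}
In the second case, $c'$ already held some refined segment at that time, or received one at column $y$, covering up to at least $y$. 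Indeed, the interval $[b_c, j]$ overlaps segments of $\HList[c']$ ending at or after $b_c$. Segments only grow, and $\HList[c']$ after processing column $j$ covers at least up to $j$ whenever a segment of $c'$ ends at $j$. We therefore aim to establish the invariant
\[
x' - 1 \ge b_c - 1 \quad\text{whenever } c' \text{ lies directly above } c.
\]
This uses the fact that $c$ and $c'$ stay adjacent throughout a haplotype interval of $c$ (Proposition~\ref{pro-consistent-color}). When $b_c$ is set at a split of $c$ at column $b_c - 1$, the predecessor of $c$ at column $b_c - 1$ was processed just before, so its coverage reached at least $b_c - 1$. If that split was passive, the predecessor may change at column $b_c$. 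We handle this by noting that $b_c \le x'$ is needed only when $A_j$ is nonempty, and we argue directly with Lemma~\ref{lem-basic}.

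\textbf{The first claim.} Here we argue by contradiction. Suppose $[b_c, j]$ overlaps at least $d+1$ segments of $A_j$. The segments of $A_j$ are contiguous and ordered. The interval $[b_c, j']$ for the smallest $j' \le j$ that overlaps $d$ segments was checked at column $j'$, in the cell of $c$ with $c'$ directly above it. Since $c'$ is the predecessor throughout, by Proposition~\ref{pro-consistent-color} applied to the current haplotype interval of $c$, the count of segments in $\HList[c']$ overlapping $[b_c, j']$ was already $d$ at that time. So the Active Split fires and $b_c$ would have been moved past $j'$, a contradiction.

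The subtle point is monotonicity. New segments of $c'$ appear only at or after the processing of $c'$ at each column $j' \ge b_c$. Before $c$ is processed at column $j'$, $\HList[c']$ contains every segment ending at most $j'$. Hence the overlap count grows by at most one per column, and it reaches exactly $d$ at some column before it can exceed $d$. A jump of two or more within one column would require two new segments of $c'$ at column $j'$, which is impossible because each column creates at most one segment per color.

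\textbf{Main obstacle.} The hardest step is justifying that $\phi_{j'}(c) = c'$ for all $j' \in [b_c, j]$. This is needed so that the check at every earlier column was made against the same list $\HList[c']$. Proposition~\ref{pro-consistent-color} gives it, since $[b_c, j]$ lies inside one haplotype interval of $c$, plus the row-$1$ exception. We would verify that case separately, where $\phi = 0$ and the statement is vacuous.
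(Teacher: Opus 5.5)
Your proof of the overlap bound is essentially the paper's argument. Assuming $[b_c,j]$ meets more than $d$ segments, Proposition~\ref{pro-consistent-color} keeps $c'$ directly above $c$ on all of $[b_c,j]$, and you exhibit an earlier column where an Active Split should have fired. Your ``the count grows by at most one per column'' version is correct provided the count at column $j'$ is taken against $\HList[c']$ as it stands at column $j'$. Equivalently, $j'$ is the right endpoint of the $d$-th segment of $A_j$ meeting $[b_c,j]$. The smallest $j'$ for which $[b_c,j']$ meets $d$ segments of the final $A_j$ is instead the \emph{left} endpoint of that segment, which need not be in $\HList[c']$ yet at column $j'$. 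The genuine gap is in the containment claim. Containment of $b_c$ in a segment of $A_j$ means $b_c\le x'-1$. The invariant you set out to prove, $x'-1\ge b_c-1$, only yields $b_c\le x'$. The passive-split case is left to ``arguing directly with Lemma~\ref{lem-basic}'', which is not an argument.

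This cannot be repaired, because the sentence is false as stated. Take $\height=\width=3$, $d=2$, $S_1=000$, $S_2=100$, $S_3=000$. Then $\col_1(\PA)=(1,2,3)$ and $\col_2(\PA)=\col_3(\PA)=(1,3,2)$. Column~1 of the PBWT is $0,1,0$ and columns~2 and~3 are all $0$, so $L_1=\{[1,1],[2,2],[3,3]\}$ and $L_2=L_3=\{[1,1],[2,3]\}$.
\begin{itemize}
\item At column~1 all three colors make Passive Splits, so $\HList[3]=\{[1,1]\}$.
\item At column~2, color~3 (row~2) has head $[2,3]$, and $[2,2]$ meets only one segment of $\HList[1]$, so it does not split.
\item For $c=2$ (row~3) we then get $b_c=2$, $c'=\phi_2(2)=3$, and $A_2=\{[1,1]\}$, which is non-empty but does not contain $b_c$.
\end{itemize}
Here $x'=2=b_c$, so your invariant holds while containment fails. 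The paper's proof does not establish this sentence either: it derives containment of $b_c$ only inside the contradiction hypothesis, where $[b_c,j]$ meets at least $d\ge 2$ segments. What is true is the weaker statement that if $[b_c,j]$ overlaps at least one segment of $A_j$, then $b_c$ lies in one. This is immediate from Lemma~\ref{lem-basic}, since $A_j$ partitions the prefix $[1,x'-1]$. The second sentence of the lemma is not used in the paper's later arguments, so replace your second part with this weaker statement.
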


\begin{proof}
We argue by contradiction. Let $A_j$ denote the state of $\HList[c']$ at column $j$. Suppose that $[b_c, j]$ overlaps more than $d$ refined segments in $A_j$. 
Let $j' < j$ be the leftmost column such that $[b_c, j']$ overlaps exactly $d$ refined segments in $A_j$. By the choice of $j'$, it must be the right endpoint of some refined segment in $A_j$. By the algorithm, this segment is appended to $\HList[c']$ at column $j'$. Let $A_{j'}$ denote the state of $\HList[c']$ at column $j'$.

By Lemma~\ref{lem-basic}, $\HList[c']$ partitions $[1, j']$ at column $j'$. Hence, $A_{j'}$ is a prefix of $A_j$, which implies that the interval $[b_c, j']$ overlaps exactly $d$ refined segments in $A_{j'}$ and that $b_c$ is contained in one of the refined segment in $A_{j'}$.

Since $b_c \le j' < j$, Proposition~\ref{pro-consistent-color} implies that $\phi_{j'}[c] = \phi_j[c] = c'$.  At column $j'$, the interval $[b_c, j']$ overlaps exactly $d$ refined segments in $A_{j'}$, and since $c' = \phi_{j'}[c]$, an Active Split step should have been triggered. Consequently, $[b_c, j']$ would have been appended to $\HList[c]$ as a refined segment, yielding a contradiction.
Hence, the assumption is false.
\end{proof}






By construction, any refined segment established at column $j$ terminates exactly at $j$. Consider a refined segment $[b_c, j]$ for haplotype $S_c$ created at site $j$. The right endpoint $e'$ of the most recently established segment in $\HList[c']$, where $c' = \phi_{j}(c)$, cannot exceed $j$, since the algorithm only processes columns up to $j$.
Lemma~\ref{lem-natural-term} (above) implies that if $e' = j$, then at most $d$ refined segments in $\HList[\phi_j(c)]$ overlap $[b_c, j]$. Lemma~\ref{lem-smaller-than-j} (below) shows that, if $e' < j$, then fewer than $d$ refined segments in $\HList[\phi_j(c)]$ overlap $[b_c, j]$.


\begin{lemma}\label{lem-smaller-than-j}
Let $[b_c, j]$ be a refined segment appended to $\HList[c]$ at column $j$, and let $c' = \phi_{j}(c)$. If the right endpoint of the most recently established segment in $\HList[c']$ is smaller than $j$, then $[b_c, j]$ overlaps strictly fewer than $d$ refined segments in $\HList[c']$.
\end{lemma}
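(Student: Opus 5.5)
We argue by contradiction, in the same spirit as the proof of Lemma~\ref{lem-natural-term}. Suppose that the right endpoint $e'$ of the last segment in $\HList[c']$ satisfies $e' < j$, and that $[b_c, j]$ overlaps at least $d$ refined segments of $\HList[c']$. By Lemma~\ref{lem-natural-term} the count is at most $d$, so it is exactly $d$. By Lemma~\ref{lem-basic}, the segments currently in $\HList[c']$ partition $[1, e']$. Hence every segment of $\HList[c']$ that overlaps $[b_c, j]$ lies inside $[1, e']$, and $[b_c, e']$ already overlaps exactly $d$ of them. Note that $b_c \le e'$ holds, because otherwise the overlap count would be $0 < d$.

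The key step is to show that an Active Split for $c$ should have been triggered at column $e'$, which is strictly before $j$. First, since $b_c \le e' < j$ and $[b_c, j]$ lies inside a single haplotype interval of $L_c$ (refined segments only subdivide haplotype intervals), Proposition~\ref{pro-consistent-color} gives $\phi_{e'}(c) = \phi_j(c) = c'$. The degenerate case $\col(\PA)_{j}(1) = c$ is excluded, because $c'$ is defined. So at column $e'$, haplotype $c'$ is immediately above $c$ in $\col_{e'}(\PA)$, and it is processed first in the row order. That is exactly the row at which the last segment of $\HList[c']$, ending at $e'$, was appended (either by a Passive or an Active Split at that row). Therefore, when the algorithm reached row $c$ at column $e'$, the list $\HList[c']$ was already in its current state.

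Next we check the state of $L_c$ at that moment. The head of $L_c$ was still $[b_c, \cdot]$, since the segment $[b_c, j]$ is only created later, at column $j$. Moreover $e' \ne e_c$, because otherwise a Passive Split would have closed the segment at $e'$ and not at $j$. Also $i > 1$, since $c'$ lies above $c$. So the Active Split condition held: $[b_c, e']$ overlaps exactly $d$ segments of $\HList[c']$. The split would then have produced the segment $[b_c, e']$ instead of $[b_c, j]$, which contradicts $[b_c, j]$ being a refined segment.

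The main obstacle is the timing argument: establishing that $\HList[c']$ at row $c$ of column $e'$ coincides with its state at column $j$. This needs two facts. The first is that the last segment of $c'$ was appended exactly at column $e'$, which holds because segments appended at column $j''$ end at $j''$. The second is that no segment was appended to $\HList[c']$ between column $e'$ and column $j$, which follows from the maximality of $e'$. We also use that the row order places $c'$ before $c$ within column $e'$.
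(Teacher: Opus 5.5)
Your proof is correct and is essentially the paper's argument. In both, $\HList[c']$ stops changing before column $j$, Proposition~\ref{pro-consistent-color} keeps $\phi(c)=c'$ throughout the segment, and an Active Split for $c$ would have fired at an earlier column, contradicting that $[b_c,j]$ is a refined segment. The only difference is the pivot column: the paper works at column $j-1$ (after handling the trivial case $b_c=j$), whereas you work directly at column $e'$; these are equivalent because $\HList[c']$ does not change between columns $e'$ and $j$.
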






\begin{proof}
If $b_c = j$, then $[b_c, j]$ overlaps at most one interval; since $d > 1$, the statement holds trivially.
Now suppose $j > b_c$. Let $A_j$ denote the state of $\HList[c']$ at column $j$. 
By Lemma~\ref{lem-natural-term}, $[b_c, j]$ overlaps at most $d$ refined segments in $A_j$.

Next, consider column $j-1$, and let $A_{j-1}$ denote the state of $\HList[c']$ at column $j-1$.
By the lemma's hypothesis, no refined segment is added to $\HList[c']$ at column $j$, which implies that $A_{j-1} = A_j$.
Since $A_j = A_{j-1}$ and $[b_c, j]$ overlaps at most $d$ refined segments in $A_j$, it follows that $[b_c, j-1]$ overlaps at most $d$ refined segments in $A_{j-1}$.

Because $[b_c, j]$ is a refined segment and all refined segments in $\HList[c]$ are pairwise disjoint (by Corollary~\ref{cor-basic}), the interval $[b_c, j-1]$ cannot itself be a refined segment. In particular, $j-1$ is not the right endpoint of any original haplotype interval in $L_c$. Since $b_c \le j-1 < j$, it follows that $c'=\phi_{j-1}(c)$ by Proposition~\ref{pro-consistent-color}.

If $[b_c, j-1]$ overlaps exactly $d$ segments in $A_{j-1}$, then an Active Split step would be triggered, and $[b_c, j-1]$ would become a refined segment, a contradiction. Hence, $[b_c, j-1]$ overlaps fewer than $d$ segments in $A_{j-1}$.

Finally, since $A_{j-1} = A_j$ and the right endpoint of the last interval in $\HList[c']$ is smaller than $j$, the intervals $[b_c, j-1]$ and $[b_c, j]$ overlap exactly the same set of segments in $A_j$. Therefore, $[b_c, j]$ overlaps fewer than $d$ refined segments in $A_j$, completing the proof.
\end{proof}

Lemma~\ref{lem-overlap-constraint} implies that the overlap constraint is satisfied. 


\begin{lemma}\label{lem-overlap-constraint}
Every refined segment $[b_c, j] \in \HList[c]$ satisfies the following properties:
(i) it is a sub-interval of some original haplotype interval in $L_c$, and
(ii) it overlaps at most $d$ refined segments in $\HList[c']$, where $c' = \phi_{j}(c)$, provided $c'> 0$.
\end{lemma}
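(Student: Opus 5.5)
Part (i) follows directly from how the algorithm creates segments; part (ii) splits on where the latest segment of $\HList[c']$ ends at the moment $[b_c,j]$ is created, and then controls segments of $\HList[c']$ created afterwards.

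\textbf{Part (i).} A segment $[b_c,j]$ enters $\HList[c]$ only in a Passive or Active Split at column $j$, with $[b_c,e_c]$ the head of $L_c$. Every interval in $L_c$ is either an original haplotype interval of $L_c$ or a suffix $[j+1,e_c]$ reinserted by an Active Split. An induction over the steps shows that each such interval is contained in an original haplotype interval. Since $b_c\le j\le e_c$, we get $[b_c,j]\subseteq[b_c,e_c]$, which is contained in an original interval.

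\textbf{Part (ii), setup.} Fix $[b_c,j]$ with $c'=\phi_j(c)>0$. By (i) and Proposition~\ref{pro-consistent-color}, $\phi_{j''}(c)=c'$ for all $j''\in[b_c,j]$, since $c'>0$ excludes the "first row" case. By Corollary~\ref{cor-basic}, the final $\HList[c']$ partitions $[1,\width]$. The segments of $\HList[c']$ overlapping $[b_c,j]$ therefore split into two groups:
\begin{itemize}
\item those already in $\HList[c']$ when $[b_c,j]$ is created (the set $A_j$);
\item those appended later, necessarily at columns $\ge j$.
\end{itemize}
The processing order is column by column, then row by row, and $c'$ lies in the row directly above $c$ in column $j$. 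So at that moment every segment of $\HList[c']$ ending at column $j$ has already been appended. Let $e'$ be the right endpoint of the last segment of $A_j$. By Lemma~\ref{lem-basic}, $A_j$ partitions $[1,e']$, and $e'\le j$.

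\textbf{Part (ii), case $e'=j$.} Lemma~\ref{lem-natural-term} shows that at most $d$ segments of $A_j$ overlap $[b_c,j]$. Any later segment of $\HList[c']$ starts at $e'+1=j+1$ by the partition property, so it cannot overlap $[b_c,j]$.

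\textbf{Part (ii), case $e'<j$.} Lemma~\ref{lem-smaller-than-j} shows that fewer than $d$ segments of $A_j$ overlap $[b_c,j]$. The final $\HList[c']$ partitions $[1,\width]$, and $A_j$ already covers $[1,e']$. The segments added later therefore cover $[e'+1,\width]$ consecutively. The first of them starts at $e'+1$ and is appended at a column $>j$, so its right endpoint exceeds $j$ and it contains all of $[e'+1,j]$. Any subsequent segment starts beyond $j$. Hence exactly one extra segment overlaps, and the total is at most $(d-1)+1=d$.

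\textbf{Main obstacle.} The delicate step is the timing claim in the case $e'<j$: no segment of $\HList[c']$ ending at a column in $[e'+1,j]$ can be created after $[b_c,j]$ is created. I will derive it from the processing order, since row $i-1$ (holding $c'$) is handled before row $i$ in column $j$, and earlier columns are finished before column $j$ begins.
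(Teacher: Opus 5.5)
Your proposal is correct and follows essentially the same route as the paper: (i) directly from the algorithm, and (ii) a case split on whether the last segment of $\HList[c']$ ends at $j$ or before $j$, using Lemmas~\ref{lem-natural-term} and~\ref{lem-smaller-than-j}, with at most one later segment overlapping in the second case. You also make explicit what the paper leaves implicit. Each color occurs once per column, so later segments of $c'$ end at columns $>j$, and in the second case only the first later segment can reach back into $[b_c,j]$.
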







\begin{proof}
Statement (i) follows directly from the algorithm. It remains to prove (ii).

Suppose that $c'>0$. Let $A_{j}$ denote the state of $\HList[c']$ at column $j$, and let $e'$ be the right endpoint of the last segment in $A_{j}$. By Lemma~\ref{lem-natural-term}, the segment $[b_c, j]$ overlaps at most $d$ segments in $A_{j}$. Moreover, by the algorithm, we have $e' \le j$. 

We distinguish two cases.
If $e' = j$, then any segment added to $\HList[c']$ after column $j$ has left endpoint greater than $j$ and therefore cannot overlap $[b_c, j]$. Hence, after processing all $\width$ columns, $[b_c, j]$ still overlaps at most $d$ segments in $\HList[c']$.
Otherwise, if $e' < j$, Lemma~\ref{lem-smaller-than-j} implies that $[b_c, j]$ overlaps fewer than $d$ segments in $A_{j}$. By the algorithm, the next segment added to $\HList[c']$ has right endpoint greater than $j$, and thus overlaps $[b_c, j]$, increasing the number of overlapping segments by one. Therefore, after processing all $\width$ columns, $[b_c, j]$ overlaps at most $d$ segments in $\HList[c']$.
Thus, the overlap constraint is satisfied.
\end{proof}

Before concluding this section, we show in Lemma~\ref{lem-d-j} that if a refined segment $[b_c, j]$ of haplotype $S_c$ overlaps exactly $d$ refined segments in $\HList[\phi_j(c)]$ at the time of its creation, then $j$ coincides with $e'$, the right endpoint of the last segment currently stored in $\HList[\phi_j(c)]$.

\begin{lemma}\label{lem-d-j}
When a refined segment $[b_c, j]$ is appended to the tail of $\HList[c]$ at column $j$, if $[b_c, j]$ overlaps $d$ refined segments in $\HList[c']$, where $c' = \phi_j[c]$, then the right endpoint of the last refined segment stored in $\HList[c']$ is $j$.
\end{lemma}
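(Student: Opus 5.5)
This follows from the contrapositive of Lemma~\ref{lem-smaller-than-j}, so the plan is short. Let $[b_c, j]$ be appended to $\HList[c]$ at column $j$, with $c' = \phi_j(c)$, and let $e'$ be the right endpoint of the last refined segment in $\HList[c']$ at that moment. The assumption that $[b_c,j]$ overlaps exactly $d$ segments forces $\HList[c']$ to be non-empty, since $d > 1$. So $e'$ is well defined, and by Lemma~\ref{lem-basic} the segments of $\HList[c']$ partition $[1, e']$.

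First I show $e' \le j$. Segments are only appended to $\HList[c']$ at the column equal to their right endpoint, by the Passive and Active Split rules. Processing is column by column, so by the time we are at column $j$ no segment with right endpoint larger than $j$ has been created.

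Next I rule out $e' < j$ by contradiction. If $e' < j$, then the hypotheses of Lemma~\ref{lem-smaller-than-j} hold, with $[b_c,j]$ a refined segment appended at column $j$ and the most recent segment of $\HList[c']$ ending before $j$. That lemma gives that $[b_c, j]$ overlaps strictly fewer than $d$ refined segments of $\HList[c']$, contradicting the assumption of exactly $d$. Hence $e' = j$.

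The one subtlety is timing within column $j$. The overlap count in the statement and in Lemma~\ref{lem-smaller-than-j} must refer to the same state of $\HList[c']$, namely the state when $[b_c,j]$ is appended. I will check this holds because $c'$ occupies row $i-1$ of $\col_j(\PA)$ and is processed before row $i$. So any segment of $\HList[c']$ ending at $j$ has already been appended, and "$e' < j$" means no such segment was created at column $j$, which is exactly the setting in which $A_{j-1} = A_j$ in the proof of Lemma~\ref{lem-smaller-than-j}. With that consistency verified, no further computation is needed.
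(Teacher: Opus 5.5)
Your proof is correct: once you note $e' \le j$, Lemma~\ref{lem-d-j} is exactly the contrapositive of Lemma~\ref{lem-smaller-than-j}. There is no circularity, since that lemma relies only on Lemma~\ref{lem-natural-term}, Corollary~\ref{cor-basic} and Proposition~\ref{pro-consistent-color}. The paper's proof simply re-runs that lemma's argument inline: because no Active Split occurred at column $j-1$, the interval $[b_c,j-1]$ overlaps fewer than $d$ segments of $A_{j-1}$, so the $d$-th overlapping segment must have been appended at column $j$. The two approaches are therefore essentially the same.
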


\subsubsection{The Total-Size Constraint}
\label{sect-size-constraint}

To bound the total number of refined segments, we henceforth call a refined segment \emph{canonical} if it is produced at an Active Split step;
otherwise, we call it non-canonical.
Thus, a canonical refined segment $[b_c, e_c]$ from $\HList[c]$ overlaps exactly $d$ refined segments from $\HList[c']$, where $c' = \phi_{e_c}(c)$.
For each canonical refined segment $[b_c, e_c]$, we (conceptually) define a set $E$ containing all $d$ refined segments from $\HList[c']$ that overlap $[b_c, e_c]$, and we call $E$ a \emph{canonical set}. Formally, a refined segment $[b_{c'}, e_{c'}]$ belongs to the canonical set of $[b_c, e_c]$ if it overlaps $[b_c, e_c]$ and $\phi_{e_c}(c) = c'$.
Proposition~\ref{pro-consistent-color} implies that $\phi_j(c) = c'$ for every $j \in [b_c, e_c]$. Therefore, all refined segments in the same canonical set come from the same haplotype, and hence are pairwise disjoint by Lemma~\ref{lem-basic}.
We say that a refined segment $[b_{c'}, e_{c'}]$ is \emph{relevant} to a canonical refined segment $[b_c, e_c]$ if $[b_{c'}, e_{c'}]$ belongs to the canonical set of $[b_c, e_c]$.

\begin{lemma}\label{lem-relevant}
Each refined segment is relevant to at most one canonical refined segment.
\end{lemma}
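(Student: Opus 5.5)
We argue by contradiction. Suppose some refined segment $[b', e']$ of haplotype $S_{c'}$ is relevant to two distinct canonical refined segments $[b_1, e_1] \in \HList[c_1]$ and $[b_2, e_2] \in \HList[c_2]$. By definition of relevance, $\phi_{e_1}(c_1) = \phi_{e_2}(c_2) = c'$. The plan is to find a single column at which both $c_1$ and $c_2$ lie immediately below $c'$ in the PA. This will force $c_1 = c_2$, and then disjointness within $\HList[c_1]$ gives a contradiction. The natural candidate column is $e'$.

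\emph{Step 1: show $e' \le e_k$ for $k = 1, 2$.} Consider the moment $[b_k, e_k]$ is created at column $e_k$ by an Active Split. At that moment it overlaps exactly $d$ refined segments of $\HList[c']$. By Lemma~\ref{lem-d-j}, the last segment then stored in $\HList[c']$ ends exactly at $e_k$. By Lemma~\ref{lem-basic}, every segment later appended to $\HList[c']$ starts after $e_k$, so it cannot overlap $[b_k, e_k]$. Hence the canonical set of $[b_k, e_k]$ in the final $\HList[c']$ consists precisely of those $d$ segments, and all of them end at or before $e_k$. Since $[b', e']$ belongs to this canonical set, $e' \le e_k$.

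\emph{Step 2: show $e' \in [b_k, e_k]$.} The segments $[b', e']$ and $[b_k, e_k]$ overlap, so $b_k \le e'$. Combined with $e' \le e_k$ from Step 1, this gives $e' \in [b_k, e_k]$ for both $k$.

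\emph{Step 3: conclude $c_1 = c_2$.} By Lemma~\ref{lem-overlap-constraint}(i), $[b_k, e_k]$ lies inside a single haplotype interval of $L_{c_k}$. Proposition~\ref{pro-consistent-color} then gives $\phi_{e'}(c_k) = \phi_{e_k}(c_k) = c'$; the degenerate first-row case is excluded because $c' > 0$. So $c_1$ and $c_2$ both occupy the row immediately below $c'$ in $\col_{e'}(\PA)$. Since $\col_{e'}(\PA)$ is a permutation, $c_1 = c_2$.

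\emph{Step 4: the contradiction.} Now $[b_1, e_1]$ and $[b_2, e_2]$ are refined segments of the same haplotype and both contain $e'$. By Corollary~\ref{cor-basic}, the segments of $\HList[c_1]$ are pairwise disjoint, so the two segments coincide. This contradicts their distinctness.

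The main obstacle is Step 1. The canonical set is defined with respect to the final $\HList[c']$, whereas Lemma~\ref{lem-d-j} describes the state at creation time. The link is the observation that, after column $e_k$, every new segment of $c'$ starts beyond $e_k$, so no new overlaps with $[b_k, e_k]$ can appear. Without Step 1 there is a real gap: the overlapping columns of the two canonical segments with $[b', e']$ could in principle be different, and then the predecessor argument in Step 3 would not apply.
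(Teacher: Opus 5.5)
Your proof is correct, and it reaches the contradiction by a more direct route than the paper's proof.

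\textbf{Shared starting point.} Both arguments rest on the same preliminary fact, obtained from Lemma~\ref{lem-d-j} together with Lemma~\ref{lem-basic}: every member of a canonical set ends no later than the canonical segment itself. You also make explicit why the canonical set, which is defined with respect to the final $\HList[c']$, equals the $d$ segments present at creation time. The paper uses this tacitly.

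\textbf{The paper's route.} After that fact, the paper argues in two stages. First, it shows that the shared segment $[x,y]$ must start strictly before both $b_{c_1}$ and $b_{c_2}$; otherwise $[x,y]\subseteq[b_{c_1},e_{c_1}]$, and $\phi^{-1}$ forces $c_1=c_2$. Second, it concludes that $[x,y]$ is the leftmost member of both canonical sets. Since each canonical set consists of exactly $d$ consecutive segments of $\HList[c']$, this gives $E_1=E_2$, hence $e_{c_1}=e_{c_2}$. Finally, $\phi^{-1}_{e_{c_1}}(c')$ cannot equal both $c_1$ and $c_2$.

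\textbf{Your route.} You notice that $e'$ already lies in both canonical segments. So a single application of Proposition~\ref{pro-consistent-color} at the witness column $e'$, plus injectivity of $\phi_{e'}$ and disjointness within $\HList[c_1]$, finishes the proof.

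\textbf{Comparison.} Your version avoids the case split. It also uses the \emph{exactly $d$} condition only through Lemma~\ref{lem-d-j}, not to match the two canonical sets against each other. The paper's version yields extra structure that the lemma does not need: the shared segment is the first element of the canonical set, and two canonical segments sharing it would end at the same column.
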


\begin{proof}
Assume, for the sake of contradiction, that there exists a refined segment $[x,y] \in \HList[c']$ that is relevant to two distinct canonical refined segments $[b_{c_1}, e_{c_1}]$ and $[b_{c_2}, e_{c_2}]$. Let $E_1$ and $E_2$ denote their corresponding canonical sets. Then $[x,y] \in E_1 \cap E_2$.

We first show that $x < b_{c_1}$ and $x < b_{c_2}$. We prove $x < b_{c_1}$; the argument for $x < b_{c_2}$ is identical. Since $[b_{c_1}, e_{c_1}]$ is canonical, it overlaps exactly $d$ refined segments in $\HList[c']$. By Lemma~\ref{lem-d-j}, the maximum right endpoint among all segments in $E_1$ is $e_{c_1}$. Because $[x,y] \in E_1$, we have $x \le y \le e_{c_1}$.
If $x \ge b_{c_1}$, then for every $j \in [x,y]$ we would have $\phi^{-1}_j(c') = c_1$. This would imply $c_1 = c_2$, and hence the canonical refined segments $[b_{c_1}, e_{c_1}]$ and $[b_{c_2}, e_{c_2}]$ overlap. By Lemma~\ref{lem-basic}, refined segments in $\HList[c']$ are pairwise disjoint, so these two segments must coincide, contradicting the assumption that they are distinct. Therefore, $x < b_{c_1}$.

Since $[x, y] \in E_1$ and $x < b_{c_1}$, it follows that $x$ is the minimum left endpoint of all segments in $E_1$. By the same logic, $x$ is also the minimum left endpoint of all segments in $E_2$.
Moreover, we have $x < b_{c_1} \le y$ and $x < b_{c_2} \le y$, and each of the canonical refined segments $[b_{c_1}, e_{c_1}]$ and $[b_{c_2}, e_{c_2}]$ overlaps exactly $d$ refined segments in $\HList[c']$. This implies that $E_1 = E_2$, and hence $e_{c_1} = e_{c_2}$.
If $c_1 = c_2$, then the two canonical refined segments coincide, contradicting the assumption that they are distinct. If $c_1 \ne c_2$ but $e_{c_1} = e_{c_2}$, then $\phi^{-1}_{e_{c_1}}(c') = c_1 \ne c_2 = \phi^{-1}_{e_{c_2}}(c')$, which is impossible.
In all cases we obtain a contradiction. 
Therefore, each refined segment is relevant to at most one canonical refined segment.
\end{proof}


Lemma~\ref{lem-total-number} establishes a relationship among the total count of refined segments, the number of canonical refined segments, and the total number of haplotype intervals.

\begin{lemma}\label{lem-total-number}
The total number of refined segments in $\HList$ equals the sum of the number of canonical refined segments and the total number of haplotype intervals. 
\end{lemma}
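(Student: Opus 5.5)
The plan is a counting argument over the algorithm's steps, fixing one color $c$ at a time. Every refined segment is appended to $\HList[c]$ at exactly one step, and that step is either a Passive Split or an Active Split. By definition, the canonical refined segments are precisely those produced by Active Splits. It therefore suffices to show that, for each $c$, the number of Passive Splits performed on $L_c$ equals $|L_c|$, the number of original haplotype intervals of $S_c$. Summing over $c$ then gives the claim.

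\textbf{Step 1: the right endpoints never change.} I will show by induction over the steps of the algorithm that the multiset of right endpoints of intervals currently in $L_c$ always equals the set of right endpoints of the original intervals not yet completed by a Passive Split.
\begin{itemize}
\item An Active Split replaces the head $[b_c,e_c]$ by $[j+1,e_c]$. This keeps the right endpoint $e_c$, and $j+1\le e_c$ because the Active Split is only invoked when $j\ne e_c$ (and $j<e_c$, since column $j$ lies in the head interval by Lemma~\ref{lem-basic}).
\item A Passive Split removes the head exactly when $j=e_c$, so it removes exactly one original right endpoint.
\end{itemize}
Hence each original interval of $L_c$ loses its right endpoint from $L_c$ exactly once, through a Passive Split at column $e_c$.

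\textbf{Step 2: every interval of $L_c$ is eventually consumed.} Color $c$ appears in every column $j$ of $\PA$, so exactly one Retrieval step concerns $c$ at each column. By Lemma~\ref{lem-basic}, the head of $L_c$ at column $j$ starts at some $x\le j$; I expect a short check that in fact $j$ lies in the head interval. When $j$ reaches the head's right endpoint, the Passive Split fires. By Corollary~\ref{cor-basic}, $L_c$ is empty after column $\width$. Consequently, each original interval yields exactly one Passive Split, and therefore exactly one non-canonical segment. No other step produces a non-canonical segment, because Active Splits produce only canonical ones by definition.

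\textbf{Step 3: summing.} For each $c$, $|\HList[c]|$ equals $|L_c|$ (the Passive Splits) plus the number of Active Splits on $c$. Summing over $c$ gives the total count of refined segments as the number of canonical refined segments plus the total number of haplotype intervals.

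The main obstacle is the invariant in Steps 1 and 2: that the Retrieval step at column $j$ always finds a head interval containing $j$, so that no interval is skipped and no Passive Split is missed. I would establish this jointly with Lemma~\ref{lem-basic}, by strengthening its induction to state that at column $j$ the head of $L_c$ begins at a value at most $j$ and ends at a value at least $j$.
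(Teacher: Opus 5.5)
Your proof is correct and follows essentially the same route as the paper. Both arguments use the same counting: Active Splits produce exactly the canonical segments, and Passive Splits are in bijection with the original haplotype intervals. Your per-color bookkeeping, together with the invariant that the head of $L_c$ contains $j$ at column $j$, makes explicit two things the paper's terse endpoint-matching argument leaves implicit: that every haplotype interval is eventually consumed by a Passive Split, and that endpoints must be matched together with their color.
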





Lemma~\ref{lem-total-canonical} bounds the number of canonical segments; together with Lemma~\ref{lem-total-number}, this establishes the upper bound on the total number of refined segments.


\begin{lemma}\label{lem-total-canonical}
An Active Split step is triggered at most $\left\lceil \frac{Y}{d-1} \right\rceil$ times, so, at most $\left\lceil \frac{Y}{d-1} \right\rceil$ canonical refined segments are created, where $Y$ is the total number of haplotype intervals.
\end{lemma}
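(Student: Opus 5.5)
We bound the number $C$ of canonical refined segments with a charging argument built on Lemma~\ref{lem-relevant}. Every canonical segment $s$ comes with a canonical set $E(s)$ of exactly $d$ refined segments, and by Lemma~\ref{lem-relevant} these sets are pairwise disjoint. Summing over all canonical segments gives
\[
dC \le X + C,
\]
where $X$ is the number of refined segments that lie in some canonical set. By Lemma~\ref{lem-total-number}, $X \le C + Y$, since the total number of refined segments is exactly $C+Y$. Combining the two inequalities gives $dC \le C + Y$, so $C \le Y/(d-1) \le \lceil Y/(d-1)\rceil$.

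The trivial inequality $X\le C+Y$ loses a factor: the chain above actually reads $dC \le C+Y$, which is already the bound we want, so no sharpening is needed. I will double-check, however, that membership in a canonical set is well defined for \emph{final} segments. Canonical sets are defined using segments present in $\HList[c']$ at the end of construction. The canonical segment $[b_c,e_c]$ overlaps exactly $d$ segments at creation time. By Lemma~\ref{lem-d-j}, the last segment of $\HList[c']$ then ends at $e_c$, so by the argument in Lemma~\ref{lem-overlap-constraint} no later segment of $c'$ overlaps it, and the set of $d$ overlapping segments is final. The disjointness from Lemma~\ref{lem-relevant} then applies directly to these final sets.

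Writing the proof in order, I will:
\begin{enumerate}
\item define, for each canonical segment, its canonical set of size exactly $d$, justified as above;
\item invoke Lemma~\ref{lem-relevant} to get disjointness, hence $dC \le$ (total number of refined segments);
\item substitute the total $C+Y$ from Lemma~\ref{lem-total-number} and solve for $C$.
\end{enumerate}

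The main obstacle is confirming that each canonical set has size exactly $d$ in the final $\HList$ rather than just at creation time, which is the role of Lemma~\ref{lem-d-j}. A subtler point is that Lemma~\ref{lem-total-number} is presumably proved independently of the present lemma: each Active Split adds one segment beyond the haplotype intervals, since a split interval's remainder is reinserted into $L_c$. If that lemma instead relied on the present one, I would argue directly that each Passive Split consumes one original haplotype interval and each Active Split creates one extra segment, giving total $= Y + C$ without circularity.
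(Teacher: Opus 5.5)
Your argument is correct and is essentially the paper's proof read statically: the paper tracks the number of unmarked refined segments during execution (each Passive Split adds one; each Active Split marks $d$ previously unmarked segments, by Lemma~\ref{lem-relevant}, and adds one; the count never goes negative), which amounts exactly to your inequality $dC \le C+Y$ evaluated at the end. One slip to fix in the write-up: disjointness of the canonical sets gives $dC = X$, and it is this together with $X \le C+Y$ that yields the bound, whereas the displayed $dC \le X + C$ combined with $X \le C+Y$ would only give $dC \le 2C+Y$, which is vacuous for $d=2$.
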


\begin{proof}
Whenever a canonical refined segment $[b,e]$ is produced at an Active Split step, we conceptually mark the $d$ refined segments in the canonical set associated with $[b,e]$. By Lemma~\ref{lem-relevant}, each refined segment can be marked at most once.

Let $\alpha$ and $\beta$ denote the current numbers of unmarked refined segments and remaining intervals in $L$, respectively, during the execution of the algorithm. Initially, $\alpha = 0$ and $\beta = Y$.

At each step, if Passive Split occurs, then one interval is moved from $L$ to $\HList$ as an unmarked refined segment. Thus, $\alpha$ increases by one and $\beta$ decreases by one.
If an Active Split occurs, then $d$ refined segments are marked and one new unmarked refined segment is added to $\HList$. Hence, $\alpha$ decreases by $d-1$, while $\beta$ remains unchanged.
Therefore, throughout the execution, $\alpha$ never exceeds $Y$. Since each Active Split step reduces $\alpha$ by $d-1$, it can be triggered at most $\left\lceil \frac{Y}{d-1} \right\rceil$ times.
\end{proof}

Theorem~\ref{theorem-dec} summarizes the properties of our decomposition algorithm.

\begin{theorem}\label{theorem-dec}
There exists an algorithm that divides the haplotype intervals in $L$ into refined segments stored in $\HList$ such that: (i) each refined segment is a subinterval of some haplotype interval; (ii) for every $1 \le c \le \height$, the refined segments in $\HList[c]$ form a partition of $[1,\width]$; (iii) each refined segment $[b,e]$ in $\HList[c]$ overlaps with at most $d$ refined segments in $\HList[c']$, where $c' = \phi_e(c)$, (whenever $c' > 0$); and (iv) the total number of refined segments is at most $\newR + \height + \left\lceil \frac{\newR + \height}{d-1} \right\rceil$, for any constant $d > 1$.
\end{theorem}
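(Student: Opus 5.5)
The plan is to take the decomposition algorithm of this section as the witness, run with a fixed constant $d>1$, and check the four properties one at a time. Every piece is already available from the lemmas above, so the proof mostly assembles them and makes one numeric substitution.

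Property (i) is part (i) of Lemma~\ref{lem-overlap-constraint}, which follows directly from the algorithm. Each refined segment is either a whole haplotype interval moved from $L_c$ by a Passive Split, or a prefix $[b_c,j]$ of the head interval $[b_c,e_c]$ of $L_c$ cut off by an Active Split. The remainder $[j+1,e_c]$ stays inside the original interval, so by induction every interval ever held in $L_c$ lies within one original haplotype interval. Property (ii) is exactly Corollary~\ref{cor-basic}, obtained from Lemma~\ref{lem-basic} after the last column has been processed. Property (iii) is part (ii) of Lemma~\ref{lem-overlap-constraint}, applied with $j=e$. That proof combines Lemma~\ref{lem-natural-term} with the case split on whether the last segment of $\HList[c']$ ends at $j$ or before it, the latter handled by Lemma~\ref{lem-smaller-than-j}.

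For (iv), I first write $Y$ for the total number of haplotype intervals. Lemma~\ref{lem-total-number} says the number of refined segments equals $Y$ plus the number of canonical refined segments. Lemma~\ref{lem-total-canonical} bounds the canonical count by $\lceil Y/(d-1)\rceil$, so the total is at most $Y+\lceil Y/(d-1)\rceil$. Proposition~\ref{pro-size} gives $Y\le \newR+\height$. Since $x\mapsto x+\lceil x/(d-1)\rceil$ is nondecreasing, substituting this bound yields $\newR+\height+\lceil(\newR+\height)/(d-1)\rceil$, which is the stated bound.

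No step here is a real obstacle, because the substantive work sits in Lemmas~\ref{lem-relevant} and~\ref{lem-total-canonical}. The one place needing care is (iii). The segment lists of the predecessor haplotype keep growing after a segment of $S_c$ is created, so one must argue that later segments of $\HList[c']$ add at most one overlap, and only in the case where the count was already strictly below $d$. This is precisely the dichotomy that Lemmas~\ref{lem-natural-term} and~\ref{lem-smaller-than-j} settle, so I would simply cite Lemma~\ref{lem-overlap-constraint}. The boundary case $c'=0$ is excluded by the hypothesis of (iii).
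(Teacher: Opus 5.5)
Your proposal is correct and is essentially the paper's proof: (i)--(iii) are read off from Lemma~\ref{lem-basic} (via Corollary~\ref{cor-basic}) and Lemma~\ref{lem-overlap-constraint}, and (iv) combines Lemmas~\ref{lem-total-number} and~\ref{lem-total-canonical} into the bound $Y+\lceil Y/(d-1)\rceil$, then substitutes $Y\le\newR+\height$ from Proposition~\ref{pro-size}; your explicit monotonicity remark is a harmless addition. One small wording point: a Passive Split may move a leftover suffix $[j+1,e_c]$ rather than a whole haplotype interval, but your induction on the contents of $L_c$ already handles this for (i).
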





\section{Two Space-Time Tradeoffs for $\phi$ Queries}
\label{sect-tradeoff}

We present two space-time tradeoffs for iterative $\phi$ queries, both based on the decomposition of haplotype intervals given by Theorem~\ref{theorem-dec}.
We assume, in this section, that for all $1 \le i \le \height$, the lists $\HList[i]$ of refined segments---constructed according to Theorem~\ref{theorem-dec} with $d := 2$---are available.
The two tradeoffs are summarized as follows:

\begin{theorem}
There exists a data structure using $S(n, \newR, \height, \width)$ bits of space that supports $\phi^1, \phi^2, \dots, \phi^k $ queries on $\PBWT$ in overall $Q(n, \newR, \height, \width, k)$ time, where 
\begin{itemize}
    \item i) $S(n, \newR, \height, \width)=O((\newR+\height)\log n)$ and $Q(n, \newR, \height, \width, k)=O(\log \log_w \min(\width,\height)+k)$, or 
    \item ii) $S(n, \newR, \height, \width)=O(\newR \log \height + \height \log n)$ and $Q(n, \newR, \height, \width,k)=O(k \log \log_w \height)$.
\end{itemize}
\end{theorem}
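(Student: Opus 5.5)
Both tradeoffs are built on the refined segments of Theorem~\ref{theorem-dec} with $d=2$. By part (iv) there are $O(\newR+\height)$ of them. For each refined segment $s=[b,e]\in\HList[c]$ we store the color $c'=\phi_e(c)$; by part (i) and Proposition~\ref{pro-consistent-color}, $\phi_j(c)=c'$ for all $j\in s$. We also store a pointer $p(s)$ to the first refined segment of $\HList[c']$ that overlaps $s$, i.e.\ the one containing $b$ (Lemma~\ref{lem-natural-term}). By part (iii), the segment of $\HList[c']$ containing any $j\in s$ is one of at most $d=2$ consecutive segments starting at $p(s)$. Hence, once we know the segment $s_0$ of $\HList[i]$ containing $j$, we obtain $\phi_j(i)$ and the segment of $\HList[\phi_j(i)]$ containing $j$ in $O(1)$ time: follow $p(s_0)$ and compare $j$ with that segment's right endpoint. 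Iterating this gives the chain $\phi^1_j(i),\dots,\phi^k_j(i)$, stopping if color $0$ is reached. This is exactly the move-structure pattern.

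\textbf{Tradeoff (i).} We store all segments in one array, grouped by color, each entry holding $(e,c',p(s))$ in $O(\log n)$ bits; the total is $O((\newR+\height)\log n)$ bits. The only non-constant step is the initial location of $s_0$. If $\width\le\height$, we use a successor structure (Lemma~\ref{lem-pred-succ}) per color over the right endpoints of $\HList[i]$, costing $O(\log\log_w \width)$. If $\height<\width$, we instead locate $j$ within column $j$. The plan for this case is to store, for every column, the $\le\height$ segments whose right endpoint is $j$, each with its row position. A predecessor search on that column (universe $\height$) then finds the row of $i$, or a nearby anchor. From the anchor we reach $i$'s segment via the same pointer walk, or via a sampled $\PA$ row found with rank on a column-restricted structure in $O(\log\log_w\height)$ time. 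This per-column bookkeeping costs $O(\newR+\height)$ words in total, since each segment is charged to one column. Taking the better of the two options gives $O(\log\log_w\min(\width,\height)+k)$.

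\textbf{Tradeoff (ii).} Here we cannot afford $\log n$-bit pointers or endpoints per segment. We keep, per color $i$, a successor structure over the right endpoints of $\HList[i]$. By Lemma~\ref{lem-pred-succ}, with $n_i$ segments this costs $O(n_i\log(\width/n_i))$ bits, which sums to $O(\newR\log\height+\height\log n)$ bits only after a more careful encoding. So we instead store each segment's next color $c'$ in $\lceil\log\height\rceil$ bits, in a sequence over alphabet $[\height]$ for use with Lemma~\ref{lem-rank-select}. Segment endpoints are represented implicitly through per-column structures: for each run-top column we record which colors end a segment there. Each step then performs one $\rank$ or one successor query over a universe of size $\height$, taking $O(\log\log_w\height)$ time, and $k$ steps give $O(k\log\log_w\height)$.

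The main obstacle is the space accounting in (ii), and the $\min(\width,\height)$ initial search in (i). The first thing I would try for (ii) is to arrange the segments column-major as a string of colors, where each column's block of segment ends has length at most $r_j$ plus the number of canonical segments ending there, so the total is $O(\newR)$ symbols. A global $\height\log n$-bit table then handles per-color offsets, and $\rank_c$ on the string replaces explicit pointers.
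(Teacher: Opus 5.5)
Your overall skeleton matches the paper's: refined segments with $d=2$, a stored parent color and pointer per segment so each step scans at most two consecutive segments, and for (ii) a column-major string of segment colors queried by $\rank$/$\select$. Pointing to the segment containing $b$ rather than the one containing $e$ is a harmless variant. However, the two steps that actually give the claimed bounds are missing.

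\textbf{Tradeoff (i), case $\height<\width$.} Your per-column search does not work. The segment of $\HList[i]$ containing $j$ generally does not end at column $j$. A search in column $j$'s list cannot locate it: keyed by color, it only finds segments ending at $j$; keyed by row, it needs the row of $i$ in column $j$, which is exactly the $\PA$ entry you are not storing. The pointer walk only moves upward along $\phi$, so it cannot reach $i$ from an arbitrary ``anchor'' in bounded time. ``Sampled $\PA$ row'' is not a defined structure.

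The missing idea is one global successor structure over the $\alpha$ keys $\width(i-1)+e$, one per refined segment, in universe $[1,n]$. Every column has at least one run, so $\alpha\ge\newR\ge\width$ and hence $n/\alpha\le\height$. Lemma~\ref{lem-pred-succ} then returns $\succe(P,\width(i-1)+j)$, the global index of the segment containing $j$, in $O(\log\log_w\height)$ time using $O(\alpha\log(n/\alpha))\subseteq O(\alpha\log\height)$ bits. Per-color successor structures fail here for time, not space. By concavity, $\sum_i n_i\log(\width/n_i)\le\alpha\log(\width\height/\alpha)\le\alpha\log\height$, so space is fine. But a color with few segments costs $\log\log_w(\width/n_i)$ time, which can be $\Theta(\log\log_w\width)$. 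Concatenating all colors into one universe is what averages the density.

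\textbf{Tradeoff (ii).} This is left as a plan. You never say how $\ell_0$ is found, how a step maps $(c_\kappa,\ell_\kappa)$ to an index in $\HList[c_{\kappa+1}]$ via the color string, or how endpoints are read to resolve ambiguities. In the paper:
\begin{itemize}
\item Compute $x=\select_{c_\kappa}(I,\ell_\kappa)$ and $t=\rank_{c_{\kappa+1}}(I,x)$.
\item The segment of color $c_{\kappa+1}$ containing $e_{c_\kappa}$ is the $t$-th or the $(t+1)$-th. It is the $t$-th exactly when both end at the same column and $c_{\kappa+1}<c_\kappa$ (Lemma~\ref{lem-correct}).
\item Two $\access$ queries on the same global structure $P$ decide which; $P$ also yields $\ell_0$.
\end{itemize}

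The ``per-column structures'' you mention would need a pointer per column, i.e.\ $\Theta(\width\log n)$ bits. That is not in $O(\newR\log\height+\height\log n)$ when $\height$ is small and $\newR=\Theta(\width)$. An $O(\alpha)$-bit vector marking column boundaries inside $I$ would also suffice: one $\rank_c$ up to the start of column $j$'s block then gives the index of $c$'s segment containing $j$ directly. Either way, some such concrete mechanism has to be stated and its cost checked.
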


Henceforth, for any $1 \le c \le \height$ and any refined segment $[b, e] \in \HList[c]$, we call $c$ the \emph{color} of $[b, e]$. 
If $\phi_e(c)>0$, we define the \emph{parent} of $[b, e]$ to be the refined segment in $\HList[\phi_e(c)]$ that contains $e$. 
Since the refined segments in $\HList[\phi_e(c)]$ partition $[1, \width]$ by Corollary~\ref{cor-basic}, this parent segment is unique if it exists.
As shown in Figure~\ref{fig:hap-inter-refined-seg}, the color of the refined segment $[4, 5]$ from $S_5$ (in gray color) is $5$. Its parent is the refined segment $[5, 6]$ from $S_1$, since $\phi_5(5)=1$.

\subparagraph*{High-Level Idea.}
Inspired by the move structure~\cite{NishimotoT21}, we eliminate expensive successor queries by maintaining, for each refined segment, its relationship to its parent: the color $c'$, the endpoints $[b_{c'}, e_{c'}]$, and the index of the parent in $\HList[c']$. By the overlap constraint, it suffices at each iteration to scan only a constant number of refined segments (e.g., in $\HList[c']$) to find the one containing $j$, assuming the queries are applied at site $j$.
In the first trade-off, we store this relationship information explicitly, using $\Theta(\log n)$ bits per segment: $\lceil \log \width \rceil$ bits for the endpoints and $\lceil \log \height \rceil$ bits for the color. This yields constant time per step after an initial successor search.
In the second trade-off, we store the endpoints in a sparse bit vector (Lemma~\ref{lem-pred-succ}), reducing the space to $O(\log \height)$ bits per segment. To recover the parent index, we encode the relative positions of right endpoints of all refined segments in a single array over $\{1,\dots,\height\}$, using $O(\log \height)$ bits per segment, and support rank and select queries via Lemma~\ref{lem-rank-select}. This introduces a rank query at each step, increasing the time per step to $O(\log \log_w \height)$.

\subsection{The First Tradeoff}


\subparagraph*{The Data Structures.}
We construct an array $X[1,\height]$ of $\height$ integers such that $X[i]=\sum_{1\le i'< i} |\HList[i']|$ for $1\le i\le \height$, so in particular $X[1]=0$.

Let $\alpha=\sum_{1\le i\le \height} |\HList[i]|$. Equivalently, $\alpha=X[\height]+|\HList[\height]|$ is the total number of refined segments. Let $\HList[i][\ell]$, for $1\le \ell\le |\HList[i]|$, denote the $\ell$-th refined segment stored in $\HList[i]$.

Then, we construct an array $T[1..\alpha]$ of triples. For each refined segment, $T$ stores its right endpoint $e$, the index $\iota$ of its parent refined segment, and the color $c$ of that parent segment in $\HList[c]$.
To this end, we iterate $i$ from $1$ to $\height$, and for each $i$, we iterate $\ell$ from $1$ to $|\HList[i]|$. For each pair $(i,\ell)$, we add a triple $(e,c,\iota)$ to $T[X[i]+\ell]$ such that: 
\begin{itemize}
    \item $T[X[i]+\ell].e\in [1, \width]$, is the right endpoint of the refined segment $\HList[i][\ell]$,
    \item $T[X[i]+\ell].c\in [1,\height]$, is the color of its parent refined segment, i.e., $\phi_{E[X[i]+\ell].e}(i)$, and
    \item $T[X[i]+\ell].\iota$, drawn from $\{1, \dots, |\HList[T[X[i]+\ell].c]|\}$, is the index of the refined segment in $\HList[T[X[i]+\ell].c]$ that contains $T[X[i]+\ell].e$.
\end{itemize}

Next, depending on whether $\width < \height$, we construct different data structures to store the right endpoints of refined segments and support access and successor queries.

If $\width < \height$, let $E_i[1..|\HList[i]|]$ be the array storing the right endpoints, drawn from $\{1, \dots, \width\}$, of the refined segments in $\HList[i]$, for $1 \le i \le \height$. We build the data structure of Lemma~\ref{lem-pred-succ} over each array $E_i$, separately, to support successor queries. Additionally, we store a list of $\height$ pointers, each using $\Theta(\log n)$ bits, to access the data structure associated with any $E_i$.

Otherwise, we instead construct an array $P[1..\alpha]$ that collectively stores these endpoints over the universe $\{1,\dots,n\}$. 
To build $P$, we iterate through $i=1,\dots,\height$ and, for each $i$, through $\ell=1,\dots,|\HList[i]|$. For each pair $(i,\ell)$, we set $P[X[i]+\ell] = \width \cdot (i-1) + e$, where $e\in \{1, \dots, \width\}$ is the right endpoint of the refined segment $\HList[i][\ell]$. After constructing $P$, we build the data structure of Lemma~\ref{lem-pred-succ} over $P[1..\alpha]$ to support successor and access queries, after which $P$ can be discarded.

Storing the array $X$ requires $O(\height \log \alpha) \subseteq O(\height \log n)$ bits, since the total number of refined segments is $\alpha$ and $\alpha \le n$. Each triple in $T[1..\alpha]$ stores a triple of $(\log \width + \log \height + \log \width)$ bits, for a total of $O(\alpha\log n)$ bits, since $\log \width + \log \height = \log n$.
If $\width < \height$, then the successor-query data structures of Lemma~\ref{lem-pred-succ} built over all arrays $E_i$, together with the $\height$ pointers, occupy $O(\alpha\log \width + \height \log n) \subseteq O(\alpha \log n)$ bits of space.
Otherwise, the data structure of Lemma~\ref{lem-pred-succ} built over $P[1..\alpha]$ requires $O(\alpha \log (n/\alpha))$ bits, since it stores $\alpha$ elements drawn from $\{1, \dots, n\}$.
The total space usage is $O(\alpha \log n) \subseteq O((\newR + \height)\log n)$ bits, since $\alpha\le \tilde{r}+\height$ by Proposition~\ref{pro-size}.

\subparagraph*{The Query Algorithm.}
Consider the iterative queries $\phi^1_j(i), \phi^2_j(i), \dots, \phi^k_j(i)$. 
To support them, for $0 \le \kappa \le k$, let $c_{\kappa} = \phi^{\kappa}_j(i)$ and let $\ell_{\kappa}$ denote the index of the refined segment in $\HList[c_{\kappa}]$ that contains $j$. 
For convenience, define $\phi^0_j(i) = i$, so $c_0 = i$.

Let $0 \le \kappa < k$ and assume that both $c_{\kappa}$ and $\ell_{\kappa}$ are known. 
We show how to compute $c_{\kappa+1}$ and $\ell_{\kappa+1}$. 
By the definitions of $T[1..\alpha]$ and $X[1..\height]$, the entry $T[X[c_{\kappa}] + \ell_{\kappa}]$ stores the triple corresponding to the refined segment $[b,e]$ in $\HList[c_{\kappa}]$ that contains $j$. 
Thus, $c_{\kappa+1}$ is simply given by $T[X[c_{\kappa}] + \ell_{\kappa}].c$. 

Recall that the parameter $d$ is set to $2$, so the segment $[b,e]$ overlaps with at most two refined segments in $\HList[c_{\kappa+1}]$. 
Since $j \in [b,e]$, the refined segment in $\HList[c_{\kappa+1}]$ that contains $j$ must be either the $T[X[c_{\kappa}] + \ell_{\kappa}].\iota$-th or the $(T[X[c_{\kappa}] + \ell_{\kappa}].\iota - 1)$-th segment. 
Hence, both $c_{\kappa+1}$ and $\ell_{\kappa+1}$ can be determined in constant time.

It remains to compute $\ell_0$, i.e., the index of the refined segment $[b_0,e_0]$ in $\HList[i]$ that contains $j$. We distinguish two cases.
If $\width < \height$, observe that $e_0$ is the successor of $j$ in $E_i$. Hence, a successor query using Lemma~\ref{lem-pred-succ} returns the rank $\ell_0$ in $O(\log \log_w \width)\subseteq O(\log \log_w \min(\height, \width))$ time, since every entry in $E_i$ lies in $\{1, \dots, \width\}$ and $\width<\height$.
Otherwise, note that in $P[1..\alpha]$, all right endpoints of segments in $\HList[i]$ are encoded as integers in the range $[\width\cdot(i-1)+1, \width\cdot i]$. We therefore compute $\ell_0$ as the successor rank $\succe(P, \width(i-1)+j)$ using Lemma~\ref{lem-pred-succ}. Correctness follows because $\width(i-1)+e_0$ is the successor of $\width(i-1)+j$ in $P$. 

By Lemma~\ref{lem-pred-succ}, $\ell_0$ can be found in $O(\log \log_w (n/\alpha)) \subseteq O(\log \log_w \height)\subseteq O(\log \log_w \min(\height, \width))$ time, since $\width\ge \height$, $n/\alpha \le n/\tilde{r} \le \height$ and $\tilde{r} \ge \width$.
Once $\ell_0$ is known, we can compute $c_{\kappa}$ and $\ell_{\kappa}$ for $1 \le \kappa \le k$ sequentially, each in constant time. 
Thus, the overall running time is $O(\log \log_w \min(\height, \width) + k)$.

\subsection{The Second Tradeoff}

To achieve the second tradeoff, we must eliminate all entries requiring $\lceil \log \width \rceil$ bits of space. Specifically, we can no longer explicitly store: (i) the right endpoint $e$ and parent index $\iota$ within the triples of $T[1, \alpha]$, and (ii) the auxiliary data structures $E_i[1..|\HList[i]|]$ used for retrieving $\ell_0$ for each haplotype $1 \le i \le \height$.

\subparagraph{The Data Structures.} Instead, we maintain the array $X[1..\height]$.
In the array $T[1..\alpha]$, instead of storing the triples, we retain only the entry $c$ from each triple, representing the color of a parent refined segment.
We keep the data structure constructed over the array $P[1..\alpha]$ that is supposed to be built for the first tradeoff, when $\width\ge\height$.

To recover the index of a parent refined segment (a role originally served by the entry $\iota$ of each triple in the first tradeoff), we introduce an auxiliary data structure $I[1..\alpha]$.
 The array $I$ encodes the relative ordering of the right endpoints of all refined segments with respect to the original haplotypes $\{S_1, \dots, S_\height\}$.


To construct $I[1..\alpha]$, we initialize it as an empty list. We then iterate through $j=1,\dots,\width$, and for each $j$, through $i=1,\dots,\height$. For each pair $(i,j)$, if $j$ is the right endpoint of some refined segment in $\HList[i]$, we append $i$ to $I$. 
In the example shown in Figure~\ref{fig:hap-inter-refined-seg}, the array $I[1..\alpha]$ is $\{1,2,5,1,2,3,5,1,3,4,5,1,2,3,2,3,4,5,1,2,3,4,5\}$.

After constructing $I$, we build the data structure of Lemma~\ref{lem-rank-select} on $I[1..\alpha]$ to support $\rank$ and $\select$ queries, after which $I$ can be discarded.
The data structure of Lemma~\ref{lem-rank-select} supports $\rank$ and $\select$ queries in $O(\log \log_w \height)$ and $O(1)$ time, respectively, since all elements are drawn from a universe of size $\height$.

The space usage is as follows. Storing $X$ requires $O(\height \log n)$ bits. Each entry of $T[1..\alpha]$ now requires $\log \height$ bits, for a total of $O(\alpha \log \height)$ bits. Since $P$ contains $\alpha$ values from the universe $\{1,\dots,n\}$, the structure of Lemma~\ref{lem-pred-succ} requires $O(\alpha \log \frac{n}{\alpha}) \subseteq O(\alpha \log \height)$ bits, as $\alpha \ge \newR \ge \width$ (Proposition~\ref{pro-size}) and $n=\width\cdot\height$. Each entry of $I$ requires $O(\log \height)$ bits, and the data structure of Lemma~\ref{lem-rank-select} built on $I[1..\alpha]$ occupies $O(\alpha \log \height)$ bits. Overall, the total space usage is $O(\alpha \log \height + \height \log n)\subseteq O(\tilde{r}\log \height+\height\log n)$ bits (Proposition~\ref{pro-size}).

\subparagraph*{The Query Algorithm.}
Consider iterative queries $\phi^1_j(i), \phi^2_j(i), \dots, \phi^k_j(i)$. 
Recall that $c_0 = i$

To find $\ell_0$, i.e., the index of the refined segment $[b_{c_0}, e_{c_0}]$ in $\HList[i]$ that contains $j$, we apply the algorithm from the first tradeoff, which is used when $\width \ge \height$. Specifically, we compute $\ell_0$ as the successor rank $\succe(P, \width(i-1)+j)$ using Lemma~\ref{lem-pred-succ}. By Lemma~\ref{lem-pred-succ}, $\ell_0$ can be computed in $O(\log\log_w n/\alpha) \subseteq O(\log\log_w \height)$ time.

It remains to show how to compute $c_{\kappa+1}$ and $\ell_{\kappa+1}$ for $0 \le \kappa < k$, assuming that $c_\kappa$ and $\ell_\kappa$ are already known. The computation of $c_{\kappa+1}$ is immediate: by the definitions of $T[1..\alpha]$ and $X[1..\height]$, we have $c_{\kappa+1} = T[X[c_\kappa] + \ell_\kappa]$, which can be evaluated in $O(1)$ time.

By definition, $\ell_{\kappa+1}$ is the index of the refined segment $[b_{c_{\kappa+1}}, e_{c_{\kappa+1}}]$ in $\HList[c_{\kappa+1}]$ that contains $j$. Let $[b_{c_\kappa}, e_{c_\kappa}]$ denote the $\ell_\kappa$-th refined segment in $\HList[c_\kappa]$, i.e., $\HList[c_\kappa][\ell_\kappa]$.

To compute $\ell_{\kappa+1}$, the algorithm in the first tradeoff finds the index $\ell$ of the refined segment in $\HList[c_{\kappa+1}]$ that contains $e_{c_\kappa}$. Since $\ell$ is explicitly stored in the triple $T[X[c_\kappa] + \ell_\kappa]$ as the entry $\iota$, it can be retrieved in constant time. In the second tradeoff, however, we remove $\iota$ from each triple to save space, and instead use the data structures built on $P[1..\alpha]$ and $I[1..\alpha]$ to recover $\ell$.
Specifically, we locate the position $x$ of the $\ell_\kappa$-th occurrence of $c_\kappa$ in $I[1..\alpha]$, i.e., $x = \select_{c_\kappa}(I, \ell_\kappa)$. By Lemma~\ref{lem-correct} (deferred to Appendix~\ref{app-correct-second}), either the $\rank_{c_{\kappa+1}}(I, x)$-th or the $(1+\rank_{c_{\kappa+1}}(I, x))$-th occurrence of $c_{\kappa+1}$ in $I$ corresponds to the refined segment in $\HList[c_{\kappa+1}]$ that contains $e_{c_{\kappa}}$.
After computing $x$, we retrieve the right endpoints of these two candidate segments via $\access$ queries on $P[1..\alpha]$, in particular obtaining the segment $\HList[c_{\kappa+1}][1+\rank_{c_{\kappa+1}}(I, x)]$. If this segment contains $e_{c_\kappa}$, we set $\ell := 1+\rank_{c_{\kappa+1}}(I, x)$; otherwise, we set $\ell := \rank_{c_{\kappa+1}}(I, x)$. Once $\ell$ is determined, we apply the same procedure as in the first tradeoff to compute $\ell_{\kappa+1}$ from $\ell$.

The computation of $\ell_{\kappa+1}$ involves one $\select$ query, two $\rank$ queries on $I[1..\alpha]$, and two $\access$ queries on $P[1..\alpha]$, yielding a running time of $O(\log \log_{w} \height)$.
In summary, $\ell_0$ can be computed in $O(\log\log_w \height)$ time, and each subsequent pair $(c_\kappa, \ell_\kappa)$ can be obtained in $O(\log \log_{w} \height)$ time. Therefore, the total running time for $k$ iterative queries is $O(k \cdot \log \log_{w} \height)$.

\bibliography{phi-main}

\appendix

\section{Proofs Omitted in Section~\ref{sect-prel}}
\label{app-prel}

\subsection{The Proof of Proposition~\ref{pro-size}}

\begin{proof}
Let $B_i$ be the set of index--endpoint pairs $(i,y)$, where $y$ is the right endpoint of a haplotype interval in $L_i$, for each $1 \le i \le \height$.

Each run induces exactly one run-top, and each run-top contributes a distinct pair to $\bigcup_i B_i$. Moreover, $(i,\width) \in B_i$ for every $i \in [1,\height]$. Hence, $\tilde{r} \le \sum_i |B_i| \le \newR + \height$.

Since the total number of haplotype intervals equals $\sum_i |B_i|$, it follows that this number lies between $\tilde{r}$ and $\newR + \height$.
\end{proof}




\section{Proof Omitted in Section~\ref{sect-decomp}}
\label{app-decomp}

\subsection{The Proof of Lemma~\ref{lem-basic}}

\begin{proof}
The proof proceeds by induction. In the base case, before processing the first column of the matrix $\PA$, the list $\HList[c]$ is empty, and the intervals in $L_c$ are the original haplotype intervals from haplotype $S_c$, which partition $[1, \width]$. Hence, the base case holds trivially.

For the inductive step, assume that immediately after processing column $j-1$, the list $\HList[c]$ partitions $[1, x-1]$ and the list $L_c$ partitions $[x, \width]$. By the algorithm, we have $x \le j$.

While processing $L_c$ at column $j$, the Retrieval step selects the first interval $[x, y]$ in $L_c$, where $y \ge x$. According to the algorithm, after processing $L_c$ and $\HList[c]$, either both lists remain unchanged, or $\HList[c]$ partitions $[1, y]$ and $L_c$ partitions $[y+1, \width]$.

In either case, the inductive hypothesis continues to hold after processing column $j$, which completes the proof.
\end{proof}

\subsection{The Proof of Lemma \ref{lem-d-j}}

\begin{proof}
Since $d > 1$, we have $j > b_c$. Because $[b_c, j]$ is a refined segment and all refined segments in $\HList[c]$ are pairwise disjoint (by Corollary~\ref{cor-basic}), the interval $[b_c, j-1]$ cannot itself be a refined segment. Hence, $j-1$ is not the right endpoint of any original haplotype interval in $L_c$. Since $b_c \le j-1 < j$, we have $c' = \phi_{j-1}(c)$ by Proposition~\ref{pro-consistent-color}.

Let $A_j$ and $A_{j-1}$ denote the state of $\HList[c']$ at columns $j$ and $j-1$, respectively. Clearly, $A_{j-1}$ can be regarded as a prefix of $A_j$, and thus $A_{j-1} \subseteq A_j$. Since $[b_c, j-1] \subseteq [b_c, j]$ and $[b_c, j]$ overlaps $d$ intervals in $A_j$, $[b_c, j-1]$ overlaps at most $d$ intervals in $A_{j-1}$.

If $[b_c, j-1]$ overlaps exactly $d$ intervals in $A_{j-1}$, then an Active Split step would be triggered, and $[b_c, j-1]$ would become a refined segment, a contradiction. Therefore, $[b_c, j-1]$ overlaps fewer than $d$ intervals in $A_{j-1}$.

Finally, since $[b_c, j]$ overlaps $d$ refined segments in $A_j$, while $[b_c, j-1]$ overlaps fewer than $d$ refined segments in $A_{j-1}$, it follows that a new refined segment must be appended to $\HList[c']$ at column $j$. By either a Passive Split or a Active Split step, this newly added segment has right endpoint $j$, which completes the proof.
\end{proof}

\subsection{The Proof of Lemma~\ref{lem-total-number}}

\begin{proof}
Let $A$ (resp. $B$) be the set of right endpoints of non-canonical refined segments (resp. haplotype intervals). Recall that a non-canonical refined segment is produced at a Passive Split step. Its right endpoint coincides with the endpoint of some haplotype interval, which establishes a one-to-one correspondence between the elements of $A$ and $B$. Hence, $|A| = |B|$.

Observe that the right endpoint of any canonical refined segment does not belong to $B$ and that the total number of refined segments is the sum of the number of canonical refined segments and the number of non-canonical refined segments. Since the latter is $|A|$, which also equals $|B|$, the claim follows.
\end{proof}

\subsection{The Proof of Theorem~\ref{theorem-dec}}

\begin{proof}
Statements (i) and (ii) follow from Lemma~\ref{lem-basic}, while (iii) follows from Lemma~\ref{lem-overlap-constraint}.

Let $Y$ denote the total number of haplotype intervals. By Lemmas~\ref{lem-total-number} and~\ref{lem-total-canonical}, the total number of refined segments is bounded by $Y + \left\lceil \frac{Y}{d-1} \right\rceil$. Since $Y \le \newR + \height$ by Proposition~\ref{pro-size}, statement (iv) follows.
\end{proof}

\section{The Details Omitted in Section~\ref{sect-tradeoff}}
\label{app-correct-second}

\subsection{The Correctness of the Algorithm in the Second Tradeoff}

The correctness follows from Lemma~\ref{lem-correct}.

\begin{lemma}\label{lem-correct}
Let $x$ be the position of the $\ell_{\kappa}$-th occurrence of $c_{\kappa}$ in $I[1..\alpha]$. Then either the $\rank_{c_{\kappa+1}}(I, x)$-th or the $(1+\rank_{c_{\kappa+1}}(I, x))$-th refined segment in $\HList[c_{\kappa+1}]$ contains $e_{c_{\kappa}}$, i.e., the right endpoint of the $\ell_{\kappa}$-th refined segment in $\HList[c_{\kappa}]$.
\end{lemma}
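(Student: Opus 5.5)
We will argue directly from the construction order of $I[1..\alpha]$. The array $I$ lists all refined segments sorted by right endpoint, with ties at the same column $j$ broken by increasing haplotype index. Hence, for every color $c$, the occurrences of $c$ in $I$ appear in the same order as the segments of $\HList[c]$. In other words, the $t$-th occurrence of $c$ in $I$ corresponds to $\HList[c][t]$. Put $e := e_{c_\kappa}$, the right endpoint of $\HList[c_\kappa][\ell_\kappa]$, and $c' := c_{\kappa+1}$. With $x = \select_{c_\kappa}(I, \ell_\kappa)$, position $x$ of $I$ is exactly the entry created for the pair (column $e$, row $c_\kappa$).

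The first step counts what $\rho := \rank_{c'}(I, x)$ measures. Since $c' \ne c_\kappa$, $\rho$ is the number of segments of $\HList[c']$ whose entry precedes position $x$. These are the segments with right endpoint $< e$, plus possibly the one segment with right endpoint exactly $e$, provided $c' < c_\kappa$ under the tie order. By Corollary~\ref{cor-basic}, $\HList[c']$ partitions $[1,\width]$ and its segments are ordered left to right.

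The second step locates the segment of $\HList[c']$ that contains $e$. Let $t$ be the number of segments of $\HList[c']$ with right endpoint $< e$. Then the segment containing $e$ is $\HList[c'][t+1]$: it is the first segment whose right endpoint is $\ge e$, and such a segment exists because the partition covers $\width \ge e$. We have either $\rho = t$ or $\rho = t+1$. The first case gives the $(1+\rho)$-th segment. The second case occurs only when $\HList[c'][t+1]$ ends exactly at $e$, and then it is the $\rho$-th segment. In both cases one of the two candidates named in the statement contains $e$.

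The only delicate point is the bookkeeping that matches occurrences of $c$ in $I$ with $\HList[c]$ in order. This needs the fact that each column $j$ and color $i$ contribute at most one entry, since segments of $\HList[i]$ are disjoint and so have distinct right endpoints. It also needs that the order of entries within a column does not matter for the within-color order. Neither fact depends on the actual PA order or on $\phi$, so no properties of the decomposition beyond Corollary~\ref{cor-basic} are needed. We also handle the edge case $\rho = 0$, where only the $(1+\rho)$-th candidate is meaningful, and note that the argument remains valid there.
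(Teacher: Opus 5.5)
Your proof is correct and takes essentially the same route as the paper. Both rely on the column-major, increasing-color construction order of $I$, and both split on whether the segment of $\HList[c_{\kappa+1}]$ containing $e_{c_\kappa}$ ends exactly at $e_{c_\kappa}$ with $c_{\kappa+1} < c_\kappa$ (giving the $\rank$-th occurrence) or not (giving the next one). Your counting formulation $\rho \in \{t, t+1\}$ also makes explicit two facts the paper's proof uses only implicitly: the occurrences of a color in $I$ are in the same order as its refined segments, and the containing segment is the first one whose right endpoint is at least $e_{c_\kappa}$.
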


\begin{proof}
Let $v < x$ (resp., $z > x$) be the rightmost (resp., leftmost) position in $I$ where the color $c_{\kappa+1}$ occurs, i.e., $v = \select_{c_{\kappa+1}}(I, \rank_{c_{\kappa+1}}(I, x))$ (resp., $z = \select_{c_{\kappa+1}}(I, 1+\rank_{c_{\kappa+1}}(I, x))$). At least one of $v$ or $z$ must exist.

Observe that $I[v]$ and $I[z]$ correspond to the right endpoints of the $\rank_{c_{\kappa+1}}(I, x)$-th and the $(1+\rank_{c_{\kappa+1}}(I, x))$-th refined segments in $\HList[c_{\kappa+1}]$, respectively. Hence, it suffices to show that, during the construction of $I[1..\alpha]$, either $I[v]$ or $I[z]$ is inserted due to the right endpoint $e_{c_{\kappa+1}}$. Recall that $e_{c_{\kappa+1}}$ is the right endpoint of the refined segment in $\HList[c_{\kappa+1}]$ that contains $e_{c_{\kappa}}$, so it follows that $e_{c_{\kappa+1}} \ge e_{c_{\kappa}}$.

We distinguish two cases. First, if $c_{\kappa+1} > c_{\kappa}$, then by the construction of $I[1..\alpha]$, the occurrence $c_{\kappa+1}$ corresponding to the endpoint $e_{c_{\kappa+1}}$ is appended to $I$ later than the occurrence $c_{\kappa}$ corresponding to the endpoint $e_{c_{\kappa}}$. Hence, $I[z]$, which is greater than $x$, corresponds to the endpoint $e_{c_{\kappa+1}}$.

Otherwise, $c_{\kappa+1} < c_{\kappa}$ (note that equality is impossible). If $e_{c_{\kappa+1}} > e_{c_{\kappa}}$, then again $e_{c_{\kappa+1}}$ corresponds to $I[z]$ by construction. Otherwise, we have $e_{c_{\kappa+1}} = e_{c_{\kappa}}$. In this case, since $c_{\kappa+1} < c_{\kappa}$ and the endpoints coincide, the occurrence $c_{\kappa+1}$ is appended to $I$ earlier than the occurrence $c_{\kappa}$ corresponding to the endpoint $e_{c_{\kappa}}$. Thus, $I[v]$, which is less than $x$, corresponds to the endpoint $e_{c_{\kappa+1}}$.

In all cases, either $I[v]$ or $I[z]$ is inserted into $I$ due to the right endpoint $e_{c_{\kappa+1}}$, which proves the claim.
\end{proof}

Consider the example shown in Figure~\ref{fig:hap-inter-refined-seg}. 
The refined segment $[4, 5]$ in $S_5$ corresponds to the fourth occurrence of $5$ in $I[1..\alpha]$. Its parent refined segment is $[5,6]$ in $S_1$, with color $1$, corresponding to the occurrence of $1$ in $I$ following the fourth occurrence of $5$.

\end{document}